\documentclass[12pt]{article}
\usepackage{url}
\usepackage{amsfonts}
\usepackage{amsmath,amssymb,color}
\usepackage{graphicx}
\usepackage{enumitem}
\usepackage{multirow}
\usepackage{bm}
\usepackage{natbib}
\usepackage{verbatim}
\usepackage{float}
\usepackage[toc,page]{appendix}
\usepackage[onehalfspacing]{setspace}
\usepackage{setspace}
\usepackage{adjustbox}
\usepackage{rotating}
\usepackage{subcaption}
\usepackage{amsthm}
\usepackage{booktabs,dcolumn,caption}

\usepackage{array}
\newcolumntype{L}[1]{>{\raggedright\let\newline\\\arraybackslash\hspace{0pt}}p{#1}}
\newcolumntype{C}[1]{>{\centering\let\newline\\\arraybackslash\hspace{0pt}}p{#1}}
\newcolumntype{R}[1]{>{\raggedleft\let\newline\\\arraybackslash\hspace{0pt}}p{#1}}

 \oddsidemargin=0in
 \evensidemargin=0in
 \topmargin=0in
 \headsep=0in
 \headheight=0in
 \textheight=9in
 \textwidth=6.5in

\newcommand{\E}{\mathbf 1}

\newcommand{\EEE}{\mathbb E}

\newcommand{\0}{\mathbf 0}

\newcommand{\ttt}{\boldsymbol \theta}

\newcommand{\aaaa}{\mathbf a}

\newcommand{\dd}{\mbox{$\mathbf d$}}

\newcommand{\R}{\mathbb R}

\newcommand{\uu}{\mathbf u}

\newcommand{\vv}{\mathbf v}

\newcommand{\xx}{\mathbf x}

\newcommand{\1}{\uppercase\expandafter{\romannumeral1}}
\newcommand{\2}{\uppercase\expandafter{\romannumeral2}}

\newcommand{\argmax}{\operatornamewithlimits{arg\,max}}

\newtheorem{theorem}{Theorem}

\newtheorem{algorithm}{Algorithm}

\newtheorem{lemma}{Lemma}

\newtheorem{proposition}{Proposition}
\newtheorem{remark}{Remark}

\title{A Note on Exploratory Item Factor Analysis by Singular Value Decomposition}
\author{Haoran Zhang, Yunxiao Chen and Xiaoou Li}
\date{}

\begin{document}
\maketitle

\doublespacing

\begin{abstract}
We revisit a singular value decomposition (SVD) algorithm given in \cite{chen2019joint} for exploratory Item Factor Analysis (IFA). This algorithm estimates a multidimensional IFA model by SVD and was used to obtain a starting point for joint maximum likelihood estimation in \cite{chen2019joint}.
Thanks to the analytic and computational properties of SVD, this algorithm guarantees a unique solution
and has computational advantage over other exploratory IFA methods.
Its computational advantage becomes significant when the numbers of respondents, items, and factors are all large.
This algorithm can be viewed as a generalization of principal component analysis
(PCA) to binary data.
In this note, we provide the statistical underpinning of the algorithm. In particular, we show its statistical consistency
under the same double asymptotic setting as in \cite{chen2019joint}. We also demonstrate how this algorithm provides a scree plot for investigating the number of factors and provide its asymptotic theory.
Further extensions of the algorithm are discussed.  Finally, simulation studies suggest that
the algorithm has good finite sample performance.


\end{abstract}	

\noindent
KEY WORDS: Exploratory item factor analysis, IFA, singular value decomposition, double asymptotics, generalized PCA for binary data

\section{Background}\label{Sec:intro}

Exploratory IFA \citep{bock1988full} has
been widely used for analyzing item-level data in social and behavioral sciences \citep{bartholomew2008analysis}.
We consider a standard exploratory IFA setting for binary item response data. Let $Y_{ij} \in \{0, 1\}$ be a random variable, denoting individual $i$'s response to item $j$, where $i = 1, ..., N$, and $j = 1, ..., J$.
Moreover, IFA assumes that  an individual $i$'s responses are driven by $K$ latent factors, denoted by $\ttt_i = (\theta_{i1}, ..., \theta_{iK})^\top$. We consider a general family of multidimensional IFA models \citep{reckase2009multidimensional}, which assumes that
\begin{equation}\label{eq:IRF}
\Pr(Y_{ij} = 1 \vert \ttt_i) = f(d_{j} + \aaaa_j^\top \ttt_i),
\end{equation}
where $\aaaa_j = (a_{j1}, ..., a_{jK})^\top$ is typically known as the loading parameters, $d_j$ is an intercept parameter,  and $f: \mathbb R \mapsto (0, 1)$ is a pre-specified monotone increasing function which guarantees \eqref{eq:IRF} to be a valid probability.
Using the terminology from generalized linear models, $f$ is called the inverse link function.
Note that \eqref{eq:IRF} includes the widely used multidimensional two-parameter logistic (M2PL) model and multidimensional  normal ogive model as special cases, for which  $f(x) = \exp(x)/(1+\exp(x))$  and $f(x) = \int_{-\infty}^x \exp(-t^2/2)/(2\pi)dt$, respectively. Moreover, we assume local independence; that is, $Y_{i1}$, ..., $Y_{iJ}$ are conditionally independent given $\ttt_i$. Finally, $\ttt_i$, $i =1, ..., N$, are independent and identically distributed, following an unknown distribution $F$.



A major focus of exploratory IFA is to estimate the loading matrix $A = (a_{jk})_{J\times K}$,
which helps to understand the latent structure underlying the set of items.
It is worth noting that the loading matrix can only be recovered up to an oblique rotation \citep{browne2001overview}.\footnote{We only discuss oblique rotation here, as our general exploratory IFA model does not require the factors to be uncorrelated. If the factors are further required to be uncorrelated, then the loading matrix can be recovered up to an orthogonal rotation, for which the rotation matrix $O$ is an orthogonal matrix \citep[e.g.,][]{kaiser1958varimax}.} That is, model~\eqref{eq:IRF} will remain unchanged, with a rotated loading vector $\tilde \aaaa_j = O^\top \aaaa_j$ and $\tilde \ttt_i = O^{-1}\ttt_i$, where $O$ is an $K\times K$ invertible matrix that is also known as an oblique rotation.
Recognizing the rotational indeterminacy issue, exploratory IFA typically proceeds in two steps.
In the first step, an estimate $\hat A$ is obtained, using an arbitrary way to fix the rotation.
Then in the second step,
analytic rotational methods are applied to $\hat A$ to obtain a more sparse loading matrix for better interpretability.
An analytic rotation finds a rotation matrix $O$ such that $\hat A O$ minimizes a certain ``complexity function", where a lower value of the complexity function indicates more sparsity in the loading matrix \citep[see][for a review of analytic rotations]{browne2001overview}.
It implicitly assumes that the true loading matrix has a sparse pattern; i.e., each item is only directly associated with a small number of factors.

In this note, we focus on the first step of exploratory IFA. In particular, we study an estimator given in \cite{chen2019joint} that is based on
SVD. Comparing with other estimators, this estimator is computationally much faster and does not suffer from convergence issues.
It was used to obtain a starting point for a constrained joint maximum likelihood estimator (CJMLE). Simulation studies showed that the convergence of CJMLE can be improved by using the SVD-based estimator as a starting point. Moreover, this SVD-based estimator itself is reasonably accurate when both $N$ and $J$ are large. Thus, it can be used not only  as a starting point for the CJMLE, but also as a quick and high-quality solution to large-scale exploratory IFA problems. In what follows, we investigate the statistical properties of this estimator.




\section{Main results}

\paragraph{SVD-based estimator.} We restate this SVD-based algorithm below.\footnote{The original algorithm was described in the supplementary material of \cite{chen2019joint}.
The algorithm here is a slightly modified version. The major modification is in step 3 of the algorithm that requires at least $K+1$ singular values to be retained. This modification can improve the finite-sample performance of the algorithm; see Remark~\ref{rmk:mod} for more discussions. The other modifications are mainly to simplify the exposition of the algorithm.}

\begin{algorithm}[SVD-based estimator for exploratory IFA]\label{alg:SVD}
\
\begin{enumerate}
\item Input response $Y = (y_{ij})_{N\times J}$, the number of factors $K$, inverse link function $f$, and truncation parameter $\epsilon_{N,J} > 0$.
\item Apply the singular value decomposition to $Y$ and obtain $Y = \sum_{j = 1}^J \sigma_j \uu_j\vv_j^\top$, where $\sigma_1 \geq ...\geq \sigma_J \geq 0$ are the singular values, and $\uu_j$s and $\vv_j$s are left and right singular vectors, respectively.
\item Let $X = (x_{ij})_{N \times J} = \sum_{k = 1}^{\tilde K} \sigma_k \uu_k\vv_k^\top,$ where
    $\tilde K = \max\big\{K+1, \argmax_k\{\sigma_k \geq 1.01 \sqrt{N}\}\big\}$.
\item Let $\hat X = (\hat x_{ij})_{N\times J}$ be defined as
$$
\hat x_{ij} =
\begin{cases}
\epsilon_{N,J}, \quad \text{if } x_{ij} < \epsilon_{N,J},\\
x_{ij}, \quad \text{if } \epsilon_{N,J} \leq x_{ij} \leq 1-\epsilon_{N,J},\\
1-\epsilon_{N,J}, \quad \text{if } x_{ij} > 1 - \epsilon_{N,J}.
\end{cases}
$$
\item Let $\tilde M = (\tilde m_{ij})_{N\times J},$ where $\tilde m_{ij} = f^{-1}(\hat x_{ij}).$
\item Let $\hat\dd = (\hat d_1,...,\hat d_J)$, where $\hat d_j = (\sum_{i=1}^N\tilde m_{ij})/N$.
\item Apply singular value decomposition to $\hat M = (\tilde m_{ij} - \hat d_j)_{N\times J}$ to have $\hat M = \sum_{j = 1}^J \hat\sigma_j \hat\uu_j\hat\vv_j^\top$, where $\hat\sigma_1 \geq ...\geq \hat\sigma_J \geq 0$ are the singular values, and $\hat\uu_j$s and $\hat\vv_j$s are the left and right singular vectors, respectively.
\item Output $\hat A = \frac{1}{\sqrt{N}}(\hat \sigma_1 \hat\vv_1,...,\hat\sigma_K\hat\vv_K), \hat \Theta = \sqrt{N}(\hat\uu_1,...,\hat\uu_K).$
\end{enumerate}
\end{algorithm}

\begin{remark}
SVD is a powerful tool for the factorization of rectangular matrices that has been widely used in multivariate statistics for the dimension reduction of data \citep{wall2003singular}.
Thanks to the mathematical properties of SVD, the estimator given by Algorithm~\ref{alg:SVD} is analytic that does not suffer from convergence issues. On the other hand, as the objective functions of
the CJMLE and the marginal maximum likelihood estimator \citep[MMLE;][]{bock1981marginal} are nonconvex, there is no guarantee for finding their global optima.
In addition, this SVD approach is also much faster than the other estimators, including the CJMLE and MMLE. In particular, the computation of the MMLE based on the vanilla expectation maximization algorithm is not affordable when the latent dimension $K$ is of a moderate size (e.g., $K\geq 5$). Even the stochastic algorithms for the MMLE \citep{cai2010high,cai2010metropolis,zhang2018improved}  and the alternating minimization algorithm for the CJMLE \citep{chen2019joint,chen2017structured} are much slower than the SVD algorithm,
as these algorithms typically need a large number of iterations to converge. A speed comparison is provided in the simulation study between the SVD method and the CJMLE.


\end{remark}

\begin{remark}\label{rmk:PCA}
Algorithm~\ref{alg:SVD} can be viewed as a generalization of PCA to binary data.
PCA is an SVD-based algorithm \citep[e.g., Chapter 14,][]{friedman2001elements} that is fast and commonly used for exploratory linear factor analysis.
Unfortunately, PCA cannot be applied to exploratory IFA, due to the nonlinear link function in IFA models.
Unlike PCA which applies SVD only once, Algorithm~\ref{alg:SVD} applies SVD twice. The first application of SVD and the inverse transformation (steps 2-5) denoise and linearize the data. Then, the second application of SVD (steps 6-7) is essentially doing PCA to the linearized data.


\end{remark}

\begin{remark}

Similar as the  CJMLE \citep{chen2019joint,chen2017structured}, this SVD-based estimator does not require the latent distribution $F$ to be known or to take a parametric form as is required in the MMLE approach.
Moreover, exploratory IFA based on
tetrachoric/polychoric correlations \citep{muthen1984general,lee1990full,lee1992structural,joreskog1994estimation}
or composite-likelihood-based estimator \citep{katsikatsou2012pairwise}
requires $F$ to be multivariate normal, with the former approach further requiring the inverse link
$f$ to be probit. In this sense, the SVD-based estimator and the CJMLE require less model assumptions than the other estimators. As a price, their consistency requires stronger conditions, specifically, a double asymptotic regime where both $N$ and $J$ diverge.

\end{remark}

\begin{remark}\label{rmk:mod}
  Steps 2-4 of the algorithm essentially follow the same procedure of \cite{chatterjee2015matrix} for matrix estimation. We thus refer the readers to \cite{chatterjee2015matrix} for the details. A small difference is that we require $\tilde K \geq K+1$ in step 3 of the algorithm. This modification does not affect the asymptotic behavior of the estimator.
  However, it can improve the finite-sample performance when $N$ and $J$ are not large enough. Intuitively, we need $\tilde K$ to be at least $K+1$, in order to recover the matrix $(d_{j} + \aaaa_j^\top \ttt_i)_{N\times J}$ which is of rank $K + 1$. The constant  1.01 in step 3 of the algorithm follows Theorem 1.1 of \cite{chatterjee2015matrix}, which makes use of the fact that $Var(Y_{ij})\leq 1/4$.
  This constant can be replaced by any fixed constant in the open interval $(1, 1.5)$, without affecting its consistency given in Theorem~\ref{thm:consis} below. We set it to be 1.01, because
  according to Theorem 1.1 of \cite{chatterjee2015matrix} this constant should be chosen close to 1 for better accuracy.
\end{remark}

\begin{remark}
The truncation step (step 4) is necessary, as it guarantees the existence of a solution. This is because, even though $x_{ij}$ in step 3 is approximating the true probability $\Pr(Y_{ij} = 1)$, it is
not guaranteed to be in the interval $(0, 1)$. As a consequence, $f^{-1}(x_{ij})$ may not be well-defined. The pre-specified truncation parameter $\epsilon_{N,J} > 0$ determines the truncation level. As shown in the sequel, the choice of $\epsilon_{N,J}$ affects the statistical consistency of the proposed algorithm. Under certain circumstances, we will need the truncation parameter $\epsilon_{N,J}$ to decay to zero as $N$ and $J$ grow to infinity, which is why we attach subscripts $N$ and $J$ to the truncation parameter.
In practice, the performance of the proposed method tends to be insensitive to the choice of $\epsilon_{N,J}$ when it is chosen sufficiently small, which is justified theoretically by Propositions~\ref{prop:bounded} and \ref{prop:normal} below, under two specific settings.  In the numerical analysis of this paper, we use $\epsilon_{N,J} = 10^{-4}$ as a default value.
\end{remark}

\paragraph{Statistical consistency.}
In what follows, we establish the theoretical consistency of this method. In particular, we
show that this SVD-based algorithm is consistent under similar asymptotic setting and notion of consistency
as in \cite{chen2019joint} and \cite{chen2017structured}. 
The proofs of our theoretical results are given in the supplementary material.
More precisely, we consider a loss function on the recovery of the true loading matrix $A^* = (a_{jk}^*)_{J\times K}$ up to an oblique rotation
\begin{equation}\label{eq:ave}
L_{N, J}(A^*, \hat A) =\min_{O\in \mathbb R^{K\times K}} \left\{ \frac{\Vert A^* - \hat A O \Vert_F^2}{JK} \right\},
\end{equation}
where the subscripts $N$ and $J$ are used to emphasize that the loss function depends on the sample size $N$ and the number of items $J$, and $\Vert X\Vert_F = \sqrt{\sum_{i}\sum_{j} x_{ij}^2}$ denotes the Frobenius norm of a matrix $X = (x_{ij})$.
Under mild technical conditions and a double asymptotic setting where both $N$ and $J$ grow to infinity, we show the loss function $L_{N, J}(A^*, \hat A) $ converges to zero in probability.
The regularity conditions and the consistency result are formally described in Theorem~\ref{thm:consis}, with two special cases discussed in the sequel.
Similar double asymptotic settings
have been considered in psychometric research, including the analyses of unidimensional IRT models \citep{haberman1977maximum,haberman2004joint} and diagnostic
classification models \citep{chiu2016joint}.
The following regularity conditions are needed for our main result in Theorem~\ref{thm:consis}. As will be discussed in the sequel, these conditions are mild.
\begin{itemize}
\item[A1.] There exists a constant $C$ such that $\sqrt{(d_j^*)^2 + \|\aaaa^*_j\|^2 } \leq C$, for $j = 1,...,J$, where $d_j^*$ and $\aaaa_j^*$ are the true item parameters.

\item[A2.] The true person parameters
$\ttt_1^*,...,\ttt_N^*$ are independent and identically distributed (i.i.d.) following a distribution $F$ which has mean $\0$ and positive definite covariance matrix $\Sigma.$ 

\item[A3.] The inverse link function $f$ is strictly monotone increasing, continuously differentiable, and Lipschitz continuous with Lipschitz constant $L$. We further assume that $$\lim_{x\rightarrow -\infty} f(x) = 0, ~~\mbox{and}~~ \lim_{x\rightarrow \infty} f(x) = 1.$$


\item[A4.] There exists a constant $C_1,$ such that the $K$th singular value of $A^*$, denoted by $\sigma_K(A^*)$, satisfies $\sigma_K(A^*) \geq C_1\sqrt{J}$ for all $J$.

\item[A5.] The sample size $N$ is no less than the number of items $J$, i.e., $N \geq J$.
\end{itemize}

\begin{theorem}\label{thm:consis}
Suppose that conditions A1-A5 are satisfied. Further suppose that $\epsilon_{N,J} \leq \frac{1}{5}$ and satisfies
\begin{align}
& \Pr\left( \|\ttt_1^*\| \geq h(2\epsilon_{N,J})/C  \right)= o({N}^{-1}),\label{eq:epsilon1}\\
&\frac{(h(2\epsilon_{N,J}))^{\frac{K+1}{K+3}}}{(\epsilon_{N,J} g(\epsilon_{N,J}))^2} = o(J^{\frac{1}{K+3}}),\label{eq:epsilon2} 
\end{align}
where
\begin{align}
h(y) &= \max\{ |f^{-1}(y)|, |f^{-1}(1-y)| \}, ~~ y \in (0,0.5),\\
g(y) &= \inf \{f'(x): x \in [f^{-1}(y),f^{-1}(1-y)]\}, ~~ y \in (0, 0.5).
\end{align}
Then the estimate $\hat A$ given by Algorithm~\ref{alg:SVD} satisfies
 $L_{N,J}(A^*,\hat A) \overset{pr}{\to} 0$, as $N, J \rightarrow \infty.$
\end{theorem}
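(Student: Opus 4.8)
The plan is to reduce the loading‑recovery loss \eqref{eq:ave} to a single matrix‑denoising statement and then invoke the guarantee of \cite{chatterjee2015matrix}. Write $M^* = (d_j^* + \aaaa_j^{*\top}\ttt_i^*)_{N\times J}$ for the linear‑predictor matrix, $P^* = (f(m_{ij}^*))_{N\times J}$ for the true success‑probability matrix, $\Theta^* = (\ttt_1^*,\dots,\ttt_N^*)^\top$, and $M_0^* = \Theta^* A^{*\top}$, a matrix of rank $K$. Since $Y = P^* + E$ with $E$ having conditionally independent, mean‑zero entries of variance at most $1/4$, steps 2--4 of Algorithm~\ref{alg:SVD} are exactly the universal singular value thresholding of \cite{chatterjee2015matrix} applied to $Y$, whose guarantee I will use to control $\|X-P^*\|_F$.

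I would first establish a purely algebraic reduction. Because $\hat\Theta\hat A^\top = \sum_{k\le K}\hat\sigma_k\hat\uu_k\hat\vv_k^\top$ is the best rank‑$K$ approximation of $\hat M$ and $M_0^*$ has rank $K$, the triangle inequality gives $\|D\|_F \le 2\|D_M\|_F$, where $D = \hat\Theta\hat A^\top - M_0^*$ and $D_M = \hat M - M_0^*$. Using $\hat\Theta^\top\hat\Theta = N I_K$ one gets the identity $\hat A - A^* H^\top = \tfrac1N D^\top\hat\Theta$ with $H = \tfrac1N\hat\Theta^\top\Theta^*$, hence $\|\hat A - A^* H^\top\|_F \le \|D\|_F/\sqrt N$. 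Choosing $O = H^{-\top}$ in \eqref{eq:ave} then yields $L_{N,J}(A^*,\hat A) \lesssim \|D_M\|_F^2/(NJ\,\sigma_{\min}(H)^2)$, so the whole theorem rests on showing $\tfrac1{NJ}\|D_M\|_F^2 \overset{pr}{\to} 0$ together with $\sigma_{\min}(H)\gtrsim 1$. The latter follows from the former: A2 gives $\sigma_K(\Theta^*)\gtrsim\sqrt N$ (as $\tfrac1N\Theta^{*\top}\Theta^*\to\Sigma\succ 0$) and A4 gives $\sigma_K(A^*)\ge C_1\sqrt J$, so $\sigma_K(M_0^*)\gtrsim\sqrt{NJ}$; once $\|D_M\|_F/\sqrt{NJ}\to 0$, Wedin's theorem forces the principal angles between the top‑$K$ left singular spaces of $\hat M$ and of $M_0^*$ to vanish, whence $\sigma_{\min}(H)\overset{pr}{\to}1$.

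To control $\tfrac1{NJ}\|D_M\|_F^2$ I would separate the centering and linearization errors. Column‑centering is a contraction, and $M_0^*$ differs from the centered $M^*$ only by the rank‑one term $\E\bar\ttt^{*\top}A^{*\top}$, whose Frobenius norm is $O_p(\sqrt J)$ since $\|\bar\ttt^*\| = O_p(N^{-1/2})$ and $\|A^*\|_{\mathrm{op}}\le C\sqrt J$ by A1; this contributes only $O_p(1/N)$ after normalization, so $\|D_M\|_F \le \|\tilde M - M^*\|_F + O_p(\sqrt J)$. On the event $\mathcal G = \{\max_i\|\ttt_i^*\| < h(2\epsilon_{N,J})/C\}$, condition \eqref{eq:epsilon1} and a union bound give $\Pr(\mathcal G)\to 1$, while A1 (via Cauchy--Schwarz) forces every $p_{ij}^*$ into $[2\epsilon_{N,J},1-2\epsilon_{N,J}]$, so no true entry is clipped. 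Since A3 makes $f^{-1}$ Lipschitz with constant $1/g(\epsilon_{N,J})$ on $[\epsilon_{N,J},1-\epsilon_{N,J}]$ and clipping toward the interval containing $p_{ij}^*$ is nonexpansive, on $\mathcal G$ one obtains $\|\tilde M - M^*\|_F^2 \le g(\epsilon_{N,J})^{-2}\|X-P^*\|_F^2$.

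The crux is the denoising bound for $\tfrac1{NJ}\|X-P^*\|_F^2$ and its interaction with the blow‑up factor $g(\epsilon_{N,J})^{-2}$. Here I would exploit that $P^* = f(M^*)$ is approximately low rank: approximating $f$ on $[-h(2\epsilon_{N,J}),h(2\epsilon_{N,J})]$ by a degree‑$p$ polynomial turns $P^*$ into a matrix of rank $O(p^{K+1})$ (a polynomial in the rank‑$(K+1)$ matrix $M^*$) up to a controlled Frobenius error, and Theorem~1.1 of \cite{chatterjee2015matrix} converts this approximate rank into an $\ell^2$ recovery bound whose variance part scales like (effective rank)$/J$. Optimizing $p$ balances the polynomial‑approximation bias against this variance; the resulting rate, once divided by $g(\epsilon_{N,J})^2$ and combined with the clipping scale $\epsilon_{N,J}$, is precisely the quantity that condition \eqref{eq:epsilon2} forces to zero. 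This three‑way tension—between the shrinking truncation level $\epsilon_{N,J}$, the inversion blow‑up $1/g(\epsilon_{N,J})$, and a low‑rank approximation that degrades as the signal domain $[-h(2\epsilon_{N,J}),h(2\epsilon_{N,J})]$ grows—is the main obstacle, and reproducing the exact exponents $(K+1)/(K+3)$ and $1/(K+3)$ is where the argument must be carried out with care. Collecting the three estimates gives $\tfrac1{NJ}\|D_M\|_F^2\overset{pr}{\to}0$, and with the reduction above this yields $L_{N,J}(A^*,\hat A)\overset{pr}{\to}0$.
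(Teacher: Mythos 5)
Your proposal follows the same overall skeleton as the paper's proof: restrict to the high-probability event that all $\|\ttt_i^*\|\le h(2\epsilon_{N,J})/C$ (via \eqref{eq:epsilon1} and a union bound), bound the denoising error of the thresholded SVD via \cite{chatterjee2015matrix} by showing $f(M^*)$ is approximately low rank, transfer that bound to $\tilde M$ through the $1/g(\epsilon_{N,J})$-Lipschitz inverse link on $[\epsilon_{N,J},1-\epsilon_{N,J}]$, absorb the centering error, and pass from $\hat M$ to the loading loss via the rank-$K$ truncation and a singular-subspace perturbation argument. Where you genuinely diverge is in the approximate-low-rank step: the paper builds $\delta$-nets for the person- and item-parameter balls, replaces each parameter by its nearest net point to obtain a matrix $f(M_\delta)$ of rank $O((C/\delta)^{K+1})$ with entrywise error $O(L\delta\, h(2\epsilon_{N,J}))$, and optimizes over $\delta$; you instead approximate $f$ by a degree-$p$ polynomial on the relevant interval and use the Schur-product rank bound $\mathrm{rank}(q(M^*))\le\binom{p+K+1}{K+1}=O(p^{K+1})$ together with Jackson's inequality (bias $O(Lh/p)$). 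Both tradeoffs are rank $\asymp s^{K+1}$ versus bias $\asymp h/s$, so both yield the exponents $(K+1)/(K+3)$ and $1/(K+3)$ after balancing. Several of your sub-arguments are in fact cleaner than the paper's: the best-rank-$K$-approximation property in Frobenius norm gives $\|D\|_F\le 2\|D_M\|_F$ directly (the paper routes through spectral norms and a rank bound, picking up a factor $\sqrt{8K}$), and noting that projection onto $[\epsilon,1-\epsilon]$ is nonexpansive toward the subinterval $[2\epsilon,1-2\epsilon]$ containing the true probabilities avoids the paper's entry-counting argument and its extra $1/\epsilon^2$ factor, so your route needs only $(h(2\epsilon))^{\frac{K+1}{K+3}}/(g(\epsilon)^2 J^{\frac{1}{K+3}})\to 0$, which is implied by \eqref{eq:epsilon2}. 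Two small cautions: Chatterjee's guarantee for a matrix within bias $\eta$ of rank $r$ scales as $\sqrt{r/J}+\eta$, not $r/J+\eta$ as your parenthetical suggests — taken literally the latter balances to exponent $1/(K+2)$ rather than $1/(K+3)$, so the square root matters; and without the paper's preliminary reduction to $\Sigma=I_K$, your alignment matrix $H$ satisfies $\sigma_{\min}(H)\overset{pr}{\to}\sqrt{\lambda_K(\Sigma)}>0$ rather than $1$, which is all your argument actually requires.
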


\begin{remark}
We remark that the notion of consistency for the estimation of the loading matrix is weaker than that in the traditional sense, since the loss function \eqref{eq:ave} is an average of the entry-wise losses when $J$ grows. Let $\tilde O$ minimize the right hand side of \eqref{eq:ave} and let $\tilde A := (\tilde a_{jk})_{J\times K} =  \hat A \tilde O$. Then \eqref{eq:ave} converges to 0 means that for any $\epsilon > 0$,
$({\sum_{j=1}^J\sum_{k=1}^K 1_{\{\vert a_{jk}^*  - \tilde a_{jk} \vert > \epsilon\}}})/{JK}$ also converges to 0.
That is, the proportion of inaccurately estimated loading parameters converges to zero in probability under the optimal rotation. Due to the double asymptotic setting, our theoretical result only suggests the sensible use of the SVD-based algorithm when the sample size $N$ and the number of items $J$ are both large.
\end{remark}

\begin{remark}
It has been well-understood that PCA can consistently estimate a linear factor model under a similar double asymptotic setting \citep{stock2002forecasting}, which provides the theoretical justification for the use of PCA in exploratory linear factor analysis.
Theorem~\ref{thm:consis} can be viewed as a similar result for exploratory item factor analysis.
\end{remark}

\begin{remark}
We provide some discussions on the regularity conditions required in Theorem~\ref{thm:consis}. Assumption~A1 requires that the parameters of each item, including the intercept and slope parameters, should not be too large. That is, the presence of an extreme item is likely to distort the analysis. Assumption A2 is a very standard assumption in exploratory IFA. It is more flexible than many exploratory IFA settings,
 as it does not require the distribution $F$ to be multivariate normal.
 Assumption A3 is satisfied by the logistic and probit link functions,  two most commonly used link functions in exploratory IFA, but it excludes, for example, the multidimensional version of the three-parameter logistic model, as a special case. Assumption A4 requires that there is sufficient variability in the items. The same assumption is also required in \cite{chen2019joint} and \cite{chen2017structured}. In fact, this assumption is satisfied with probability tending to one, when the true loadings $\aaaa_j^*$ are i.i.d. samples from a $K$-variate distribution whose covariance matrix is non-degenerate. Finally, assumption A5 is
 practically reasonable, as in large-scale measurement, the sample size is usually larger than the number of items. Since people and items are almost mathematically symmetric
 in the IFA model, similar asymptotic results can be derived when $J\geq N$.

\end{remark}

\begin{remark}
We further provide some intuitions on the reason why the algorithm works.
Steps 2-4 essentially follow the same procedure of \cite{chatterjee2015matrix} for matrix estimation.
The procedure guarantees the loss ${\sum_{i,j}( f(d_j^* + (\aaaa_j^*)^\top \ttt_i^*) - \hat x_{ij} )^2}/{(NJ)}$
to be small with high probability, where $d_j^*$ and $\aaaa_j^*$ denote the true item specific parameters and $\ttt_i^*$ denotes the true person parameters sampled from distribution $F$.
Further with conditions A1 and A3, steps 5 and 6 guarantee the average loss
$\sum_{i=1}^N \sum_{j=1}^J ((\aaaa_j^*)^\top \ttt_i^* - \hat \aaaa_j^\top \hat \ttt_i)^2/(NJ)$
to be small with high probability. Finally, under conditions A2 and A4,
the famous Davis-Kahan-Wedin theorem from matrix perturbation theory \citep[see e.g.,][]{stewart1990matrix,o2018random} guarantees that $L_{N, J}(A^*, \hat A)$ is small with high probability.
\end{remark}

\begin{remark}\label{rmk:eps}
Equations~\eqref{eq:epsilon1} and \eqref{eq:epsilon2} are requirements on the truncation parameter $\epsilon_{N,J}$, which depends on both the tail of distribution $F$ and the properties of the inverse link function. Roughly speaking, Equation~\eqref{eq:epsilon1} is saying that
$\epsilon_{N,J}$ cannot be too large. This is because, given $F$ and $f$, the probability in \eqref{eq:epsilon1} is increasing in $\epsilon_{N, J}$. Requiring the probability being $o(N^{-1})$ implies that $\epsilon_{N, J}$ cannot be large.
This requirement is intuitive, because $\tilde M$ can be a poor approximation to $M^* = (m_{ij}^*)_{N\times J} := (d_{j}^* + (\aaaa_j^*)^\top \ttt_i^*)_{N\times J}$, when many entries of $M^*$ are larger than $h(\epsilon_{N, J})$. The function $h(\cdot)$ transforms the truncation on $x_{ij}$ to a truncation on $\tilde m_{ij}$. Using $h(2\epsilon_{N, J})$ instead of $h(\epsilon_{N, J})$ is for technical reasons.

Equation~\eqref{eq:epsilon2} requires that $\epsilon_{N,J}$ cannot be too small, as the left hand side of \eqref{eq:epsilon2} is decreasing in $\epsilon_{N,J}$. This requirement is also intuitive. 
Note that  $|\tilde m_{ij}| \leq h(\epsilon_{N, J})$, where $h(\epsilon_{N, J})$ is decreasing in $\epsilon_{N,J}$.
Therefore, a sufficiently large choice of $\epsilon_{N,J}$
avoids the approximation error $\Vert \tilde M - M^*\Vert_F$ being too large when there exist
some extreme estimates $\tilde m_{ij}$.
Function $g(\cdot)$ measures the local flatness of the inverse link $f$. The true matrix $M^*$ is more difficult to estimate when
$g(\epsilon_{N,J})$ is smaller. This is because $|\tilde m_{ij} - m_{ij}^*|$ can be large, even when $|\hat x_{ij} - f(m_{ij}^*)|$ is small, due to the local flatness of the inverse link function.

\end{remark}

\begin{remark}\label{rmk:random design}

We take a stochastic design for the true person parameters and a fixed design for the true item parameters,
following the convention of item factor analysis \citep[e.g.,][]{bartholomew2008analysis}.
It is worth pointing out that whether taking a stochastic or fixed design is not essential under our double asymptotic regime.
For example, the consistent result of Theorem \ref{thm:consis} still holds, if we can replace condition A2 by a corresponding fixed design as in  \cite{chen2019joint}.
\end{remark}

Following the discussion on $\epsilon_{N,J}$ in Remark~\ref{rmk:eps}, we consider two concrete settings under which the requirement on
$\epsilon_{N,J}$ becomes more specific. These results are given in Propositions~\ref{prop:bounded} and \ref{prop:normal}.

\begin{proposition}\label{prop:bounded}
  Suppose that $F$ has a compact support. More precisely, there exists a constant $C_0$, satisfying $$\Pr(\Vert \ttt_1^*\Vert \geq C_0) = 0,$$
  under the law of $F$.
  If we fix $\epsilon_{N,J}$ to be a constant $\epsilon$ independent of $N$ and $J$, satisfying
  \begin{equation}\label{eq:epschoice}
  0 < \epsilon \leq \frac{1}{2}\min\left\{1-f\left(C\sqrt{C_0^2+1}\right),f\left(-C\sqrt{C_0^2+1}\right), \frac{2}{5} \right\},
  \end{equation}
  then \eqref{eq:epsilon1} and \eqref{eq:epsilon2} are satisfied. This choice of $\epsilon_{N,J}$, together with the regularity conditions in Theorem~\ref{thm:consis},
  guarantees $L_{N, J}(A^*, \hat A)$ to converge to zero in probability. 
\end{proposition}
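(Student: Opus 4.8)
The plan is to deduce this proposition directly from Theorem~\ref{thm:consis}: it suffices to check that the constant choice $\epsilon_{N,J}=\epsilon$ obeying \eqref{eq:epschoice} satisfies the three hypotheses on the truncation parameter, namely $\epsilon\le 1/5$ together with conditions \eqref{eq:epsilon1} and \eqref{eq:epsilon2}. Once these are verified, the conclusion $L_{N,J}(A^*,\hat A)\overset{pr}{\to}0$ follows immediately from Theorem~\ref{thm:consis} under A1--A5. The bound $\epsilon\le 1/5$ is the easiest: the term $2/5$ inside the minimum in \eqref{eq:epschoice} forces $\epsilon\le\tfrac12\cdot\tfrac25=\tfrac15$.

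To verify \eqref{eq:epsilon1} I would exploit the compact support. The two nontrivial terms in the minimum in \eqref{eq:epschoice} translate, via the strict monotonicity of $f$ (condition A3), into a lower bound on $h(2\epsilon)$. Concretely, $\epsilon\le\tfrac12\big(1-f(C\sqrt{C_0^2+1})\big)$ gives $f(C\sqrt{C_0^2+1})\le 1-2\epsilon$, hence $f^{-1}(1-2\epsilon)\ge C\sqrt{C_0^2+1}>0$; similarly $\epsilon\le\tfrac12 f(-C\sqrt{C_0^2+1})$ gives $f^{-1}(2\epsilon)\le -C\sqrt{C_0^2+1}$. Either inequality yields $h(2\epsilon)=\max\{|f^{-1}(2\epsilon)|,|f^{-1}(1-2\epsilon)|\}\ge C\sqrt{C_0^2+1}$, so that $h(2\epsilon)/C\ge\sqrt{C_0^2+1}>C_0$. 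Since $F$ has compact support with $\Pr(\|\ttt_1^*\|\ge C_0)=0$, the event $\{\|\ttt_1^*\|\ge h(2\epsilon)/C\}$ is contained in $\{\|\ttt_1^*\|\ge C_0\}$ and therefore has probability exactly $0$, which is trivially $o(N^{-1})$. This settles \eqref{eq:epsilon1}.

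For \eqref{eq:epsilon2} the point is simply that the left-hand side is a fixed number. Because $\epsilon$ does not depend on $N$ or $J$, each of $h(2\epsilon)$, $\epsilon$, and $g(\epsilon)$ is a constant; moreover $g(\epsilon)>0$, since $f'$ is continuous (A3) and positive on the compact interval $[f^{-1}(\epsilon),f^{-1}(1-\epsilon)]$, so its infimum is attained and strictly positive, making the quantity $(h(2\epsilon))^{(K+1)/(K+3)}/(\epsilon g(\epsilon))^2$ finite. As $J\to\infty$ the order term $J^{1/(K+3)}\to\infty$, so any fixed finite constant is $o(J^{1/(K+3)})$, and \eqref{eq:epsilon2} holds.

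None of these steps is genuinely hard; the only place requiring care is the chain of inequalities in the second paragraph, where I must track signs when inverting $f$ and confirm that the positive quantity $C\sqrt{C_0^2+1}$ strictly dominates the support radius $C_0$. A secondary point worth spelling out is the positivity of $g(\epsilon)$, which is what guarantees the left-hand side of \eqref{eq:epsilon2} is finite; this is where continuous differentiability of $f$ on the finite interval $[f^{-1}(\epsilon),f^{-1}(1-\epsilon)]$ is used. With \eqref{eq:epsilon1}, \eqref{eq:epsilon2}, and $\epsilon\le 1/5$ all in hand, Theorem~\ref{thm:consis} closes the argument.
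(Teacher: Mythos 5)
Your proposal is correct and follows essentially the same route as the paper's own (much terser) proof: derive $h(2\epsilon)\ge C\sqrt{C_0^2+1}$ from \eqref{eq:epschoice} so that the probability in \eqref{eq:epsilon1} is exactly zero, note that the left-hand side of \eqref{eq:epsilon2} is a fixed constant and hence $o(J^{1/(K+3)})$, and invoke Theorem~\ref{thm:consis}. You simply spell out the sign-tracking and the positivity of $g(\epsilon)$ that the paper leaves implicit.
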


\begin{proposition}\label{prop:normal}
Consider exploratory IFA based on the M2PL model, where
 $F$ is a multivariate sub-Gaussian distribution\footnote{We say the distribution of a K-variate random vector $\ttt$ is sub-Gaussian, if
  there exist constant $b_1, b_2 >0$ such that for any $u \in \R^K, \|\mathbf u\|=1$ and $t>0$, $\Pr(|\mathbf u^\top\ttt| > t ) \leq b_1e^{-b_2t^2}$. In particular, the multivariate normal distribution is sub-Gaussian.} and $f$ is the logistic link. Suppose that there exists a constant $\beta \geq 1$ such that
 \begin{equation}\label{eq:NJrelation}
 J \leq N \leq J^{\beta}.
 \end{equation}
 Then 
 \eqref{eq:epsilon1} and \eqref{eq:epsilon2} hold, for any $\epsilon_{N,J}$ taking the form
 \begin{equation}\label{eq:eps_choice}
 \epsilon_{N,J} = \gamma_0 J^{-\gamma_1},
 \end{equation}
 where $\gamma_0$ and $\gamma_1$ are any constants satisfying $\gamma_0 > 0$ and $\gamma_1 \in (0, (4(K+3))^{-1})$.
The choice of $\epsilon_{N,J}$ following \eqref{eq:eps_choice}, together with the regularity conditions in Theorem~\ref{thm:consis}, guarantees $L_{N, J}(A^*, \hat A)$ to converge to zero in probability.

\end{proposition}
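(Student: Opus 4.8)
The plan is to reduce the proposition to a direct verification of the two conditions \eqref{eq:epsilon1} and \eqref{eq:epsilon2} of Theorem~\ref{thm:consis}: once these hold for the stated choice of $\epsilon_{N,J}$, the conclusion $L_{N,J}(A^*,\hat A)\overset{pr}{\to}0$ follows immediately, since conditions A1--A5 are assumed and, because $\epsilon_{N,J}=\gamma_0 J^{-\gamma_1}\to 0$, the requirement $\epsilon_{N,J}\leq 1/5$ holds for all large $J$. The first step is therefore to make $h$ and $g$ explicit for the logistic link. Writing $f(x)=1/(1+e^{-x})$, one has $f^{-1}(y)=\log\{y/(1-y)\}$ and $f'(x)=f(x)(1-f(x))$. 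Using the symmetry $f^{-1}(1-y)=-f^{-1}(y)$ and the fact that $f'$ attains its minimum over the symmetric interval $[f^{-1}(y),f^{-1}(1-y)]$ at the endpoints, I would obtain the closed forms
\begin{equation*}
h(y)=\log\frac{1-y}{y},\qquad g(y)=y(1-y),\qquad y\in(0,0.5).
\end{equation*}
The crucial qualitative features, after substituting $\epsilon_{N,J}=\gamma_0 J^{-\gamma_1}$, are that $h(2\epsilon_{N,J})=\gamma_1\log J-\log(2\gamma_0)+o(1)=\Theta(\log J)$ grows only logarithmically, while $g(\epsilon_{N,J})=\epsilon_{N,J}(1-\epsilon_{N,J})=\Theta(\epsilon_{N,J})$.

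For \eqref{eq:epsilon1} I would exploit the sub-Gaussianity of $F$. Since $K$ is fixed, a union bound over the $K$ coordinates $\theta_{1k}^*=\ee_k^\top\ttt_1^*$ yields $\Pr(\|\ttt_1^*\|\geq t)\leq K b_1\exp(-b_2 t^2/K)$ for all $t>0$. Taking $t=h(2\epsilon_{N,J})/C=\Theta(\log J)$, the right-hand side is at most $Kb_1\exp(-c(\log J)^2)$ for some constant $c>0$, which decays faster than any fixed power of $J$. Since $N\leq J^\beta$ by \eqref{eq:NJrelation},
\begin{equation*}
N\,\Pr\!\left(\|\ttt_1^*\|\geq h(2\epsilon_{N,J})/C\right)\leq Kb_1\exp\!\big(\beta\log J-c(\log J)^2\big)\to 0,
\end{equation*}
so \eqref{eq:epsilon1} holds. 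The key mechanism is that the logit link forces $h$ to grow only logarithmically, which is exactly slow enough for the sub-Gaussian tail to overwhelm the polynomial factor $N$.

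For \eqref{eq:epsilon2} I would substitute the asymptotics from the first step. The numerator satisfies $(h(2\epsilon_{N,J}))^{(K+1)/(K+3)}=\Theta((\log J)^{(K+1)/(K+3)})$, a poly-logarithmic factor, while $(\epsilon_{N,J}g(\epsilon_{N,J}))^2=\Theta(\epsilon_{N,J}^4)=\Theta(J^{-4\gamma_1})$. Hence the left-hand side of \eqref{eq:epsilon2} is of order $J^{4\gamma_1}(\log J)^{(K+1)/(K+3)}$, and the requirement becomes
\begin{equation*}
J^{4\gamma_1}(\log J)^{\frac{K+1}{K+3}}=o\!\left(J^{\frac{1}{K+3}}\right),
\end{equation*}
which holds precisely when $4\gamma_1<1/(K+3)$, i.e.\ $\gamma_1<(4(K+3))^{-1}$ --- exactly the admissible range stated in the proposition --- because a strictly negative polynomial exponent dominates any poly-logarithmic growth.

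The argument is essentially a power-counting once $h$ and $g$ are in closed form; the only point requiring care --- and the step I expect to be the main (though still mild) obstacle --- is the tail bound in \eqref{eq:epsilon1}. One must pass from the one-dimensional projection bound in the definition of sub-Gaussianity to a bound on $\|\ttt_1^*\|$ and track the logarithmic threshold carefully enough to confirm that the exponent $-c(\log J)^2$ genuinely dominates $\beta\log J$ for every fixed $\beta$. Since $K$ is fixed this is dispatched by the elementary union bound above, but it is precisely where the interplay between the link function and the tail of $F$ enters the verification.
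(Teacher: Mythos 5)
Your proposal is correct and follows essentially the same route as the paper: compute $h(y)=\log\frac{1-y}{y}$ and $g(y)=y(1-y)$ for the logit link, verify \eqref{eq:epsilon1} by a Gaussian-type tail bound on $\|\ttt_1^*\|$ (the paper invokes sub-exponentiality of $\|\ttt_1^*\|^2$, which is what your coordinatewise union bound delivers up to constants) so that the $(\log J)^2$ in the exponent dominates $\beta\log J$, and verify \eqref{eq:epsilon2} by the same power counting giving $J^{4\gamma_1+o(1)}=o(J^{1/(K+3)})$ exactly when $\gamma_1<(4(K+3))^{-1}$. No gaps.
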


According to the result of Proposition~\ref{prop:bounded},
it suffices to choose $\epsilon_{N,J}$ as a sufficiently small positive constant, when $F$ has a bounded support.
Under the setting of Proposition~\ref{prop:normal}, to ensure consistency, one has to let $\epsilon_{N,J}$ decay to zero at an appropriate rate. 
Note that even in the second setting where the support of $F$ is unbounded, $\epsilon_{N,J}$ is almost like a constant, as it
decays to zero very slowly when $J$ grows. These results suggest that we may choose $\epsilon_{N,J}$ to be a sufficiently small constant in practice.

%
%
%
%

\paragraph{On the choice of $K$.} In the previous discussion, the number of factors $K$ is assumed to be known. In practice, however, this information is often unknown and an important task in exploratory IFA is to determine the number of factors based on data.
When conducting exploratory linear factor analysis, one typically gains the first idea by examining the scree plot from principal component analysis. Thanks to the connection between Algorithm~\ref{alg:SVD} and PCA as discussed in Remark~\ref{rmk:PCA},
a similar scree plot is available from the current method.

The scree plot is produced as follows. We first run Algorithm~\ref{alg:SVD}, but replace the unknown $K$ in step 1 of the algorithm by a reasonably large number $K^{\dagger}$.
Then, a scree plot can be obtained by plotting $\hat \sigma_k$ in a descending order, for $\hat \sigma_k$s produced by step 7 of Algorithm~\ref{alg:SVD}.
Figure~\ref{fig:fig3} shows such a scree plot, for which the data are generated from a five-factor model ($K=5$) with $J = 200$ and $N = 4000$, and the input number of factors is set to be $K^{\dagger} = 10$ in step 1 of the algorithm.
Unsurprisingly, an obvious gap is observed between $\hat \sigma_{5}$ and $\hat \sigma_{6}$. In fact, when data follow an IFA model, such a gap in the singular values is guaranteed to exist asymptotically, no matter what the input dimension is. In practice, the latent dimension $K$ can be chosen by identifying the singular value gap from the scree plot.

\begin{theorem}\label{thm:singular}
Under the same conditions as Theorem~\ref{thm:consis} and when the input dimension $K^{\dagger}$ in Algorithm~\ref{alg:SVD} is set fixed (i.e., independent of $N$ and $J$) but not necessarily equal to the true number of factors,
there exists a constant $\delta > 0$ such that for the true number of factors $K$,
$$\lim_{N, J\rightarrow \infty} \Pr\left(\frac{\hat \sigma_{K}}{\sqrt{NJ}} > \delta\right) = 1, \mbox{~and~~} \frac{\hat \sigma_{K+1}}{\sqrt{NJ}}  \overset{pr}{\to} 0,$$ as $N$ and $J$ grow to infinity simultaneously.

\end{theorem}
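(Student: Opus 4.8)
The plan is to compare the centered data matrix $\hat M$ from step~7 of Algorithm~\ref{alg:SVD} with a centered ``signal'' matrix and then read off the two claims from Weyl's singular-value perturbation inequality. Write $\Theta^* = (\ttt_1^*,\ldots,\ttt_N^*)^\top \in \R^{N\times K}$, put $\bar{\ttt} = N^{-1}\sum_{i=1}^N \ttt_i^*$, and let $\Theta_c^* = \Theta^* - \mathbf{1}_N\bar{\ttt}^\top$ be the column-centered person-parameter matrix. Define the centered signal matrix $M_c^* := \Theta_c^*(A^*)^\top \in \R^{N\times J}$, whose $(i,j)$ entry is $(\aaaa_j^*)^\top(\ttt_i^* - \bar{\ttt})$; the intercepts $d_j^*$ cancel under centering, so $M_c^*$ has rank at most $K$ and hence $\sigma_{K+1}(M_c^*) = 0$. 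The two facts I would establish are (i) $\|\hat M - M_c^*\|_F = o_p(\sqrt{NJ})$ and (ii) $\sigma_K(M_c^*) \geq c\sqrt{NJ}$ with probability tending to one, for some constant $c>0$. Granting these, Weyl's inequality $|\hat\sigma_k - \sigma_k(M_c^*)| \le \|\hat M - M_c^*\|_{\mathrm{op}} \le \|\hat M - M_c^*\|_F$ immediately gives both conclusions with $\delta = c/2$, since $\hat\sigma_K \geq c\sqrt{NJ} - o_p(\sqrt{NJ})$ and $\hat\sigma_{K+1}\le 0 + o_p(\sqrt{NJ})$. Note that only the \emph{magnitudes} of the perturbed singular values are needed here, so Weyl suffices and no Davis--Kahan--Wedin subspace argument is required.

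For (i) I would first record a convenient algebraic reduction. With $\tilde M = (\tilde m_{ij})$ and $M^* = (m_{ij}^*) = (d_j^* + (\aaaa_j^*)^\top\ttt_i^*)$, a direct computation shows that $\hat M - M_c^*$ is exactly the column-centering of the estimation error $\tilde M - M^*$, i.e.\ $\hat M - M_c^* = (I_N - N^{-1}\mathbf{1}_N\mathbf{1}_N^\top)(\tilde M - M^*)$. Since column-centering is an orthogonal projection, hence a contraction in Frobenius norm, $\|\hat M - M_c^*\|_F \le \|\tilde M - M^*\|_F$. It therefore suffices to show $\frac{1}{NJ}\|\tilde M - M^*\|_F^2 \overset{pr}{\to} 0$, which is precisely the intermediate approximation bound already contained in the proof of Theorem~\ref{thm:consis}: the universal singular value thresholding of steps 2--4, following \cite{chatterjee2015matrix}, controls $\frac1{NJ}\sum_{i,j}(\hat x_{ij} - f(m_{ij}^*))^2$, and conditions A1 and A3 together with the truncation level propagate this through $f^{-1}$ to $\frac1{NJ}\sum_{i,j}(\tilde m_{ij} - m_{ij}^*)^2$. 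Here I would take care to verify that running the algorithm with the fixed input $K^{\dagger}$ in place of $K$ alters step 3 only by retaining at most a fixed number of extra singular values, each below the threshold $1.01\sqrt N$; these contribute at most $O(N) = o(NJ)$ to the squared Frobenius error, so the rate is unaffected, in agreement with Remark~\ref{rmk:mod}.

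For (ii) I would exploit the factorization $M_c^* = \Theta_c^*(A^*)^\top$ and the fact that the nonzero eigenvalues of $M_c^*(M_c^*)^\top = \Theta_c^*(A^*)^\top A^*(\Theta_c^*)^\top$ coincide with those of the $K\times K$ matrix $(A^*)^\top A^*\,(\Theta_c^*)^\top\Theta_c^*$. Both $(A^*)^\top A^*$ and $(\Theta_c^*)^\top\Theta_c^*$ are symmetric positive definite (the latter with probability tending to one), and for a product of two positive definite matrices $\lambda_{\min}(PQ)\ge\lambda_{\min}(P)\lambda_{\min}(Q)$, so that $\sigma_K(M_c^*)^2 \ge \sigma_K(A^*)^2\,\sigma_K(\Theta_c^*)^2$. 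Condition A4 supplies $\sigma_K(A^*)^2\ge C_1^2 J$. For the person-parameter factor, $N^{-1}(\Theta_c^*)^\top\Theta_c^*$ is the sample covariance of the i.i.d.\ vectors $\ttt_i^*$, so by the law of large numbers it converges in probability to $\Sigma$; hence $\sigma_K(\Theta_c^*)^2 = N\,\lambda_{\min}\!\big(N^{-1}(\Theta_c^*)^\top\Theta_c^*\big) \ge \tfrac12 N\lambda_{\min}(\Sigma)$ with probability tending to one, using that $\Sigma$ is positive definite by A2. Combining the two bounds yields $\sigma_K(M_c^*)^2 \ge \tfrac12 C_1^2\lambda_{\min}(\Sigma)\,NJ$ on an event of probability tending to one, which is (ii) with $c = C_1\sqrt{\lambda_{\min}(\Sigma)/2}$.

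I expect the main obstacle to be the careful justification of (i) under an arbitrary fixed input dimension $K^{\dagger}$: one must confirm that the thresholding guarantee of \cite{chatterjee2015matrix} and its propagation through the (possibly steep) inverse link $f^{-1}$ still deliver $\frac1{NJ}\|\tilde M - M^*\|_F^2\overset{pr}{\to} 0$ when $K^{\dagger}\neq K$, and in particular that the extra retained components are genuinely negligible at the $\sqrt{NJ}$ scale. By contrast, the singular-value lower bound (ii) and the concluding Weyl argument are routine linear algebra combined with the law of large numbers.
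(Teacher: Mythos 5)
Your proposal is correct and follows essentially the same route as the paper: Weyl's singular-value perturbation inequality applied to $\hat M$ against a rank-$K$ signal matrix, with the gap at $K$ secured by condition A4 together with the law of large numbers for the person-parameter Gram matrix, and the Frobenius control of the perturbation borrowed from the intermediate bounds in the proof of Theorem~\ref{thm:consis}. The only (harmless) difference is that you center the reference matrix, turning the perturbation into an orthogonal projection of $\tilde M - M^*$ and thereby avoiding the cross-term involving $\hat \dd - \dd^*$ that the paper handles via its inequality \eqref{eq:ineq1}; your explicit accounting for the extra components retained when $K^{\dagger}\neq K$ is also slightly more careful than the paper's one-line remark.
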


\begin{remark}
As shown in the proof, the input dimension $K^{\dagger}$ does not affect the asymptotics, as long as it does not grow with $N$ and $J$. However, for relatively small $N$ and $J$,
$X$ obtained in step 3 of the algorithm may not reserve enough information when the input dimension is smaller than $K+1$, which may lead to an
underestimation of the number of factors.
Thus, in practical applications,
we recommend to choose the input dimension to be slightly larger than the
maximum number of factors one suspects to exist in the data.
\end{remark}

 \begin{figure}
   \centering
   \includegraphics[scale = 0.7]{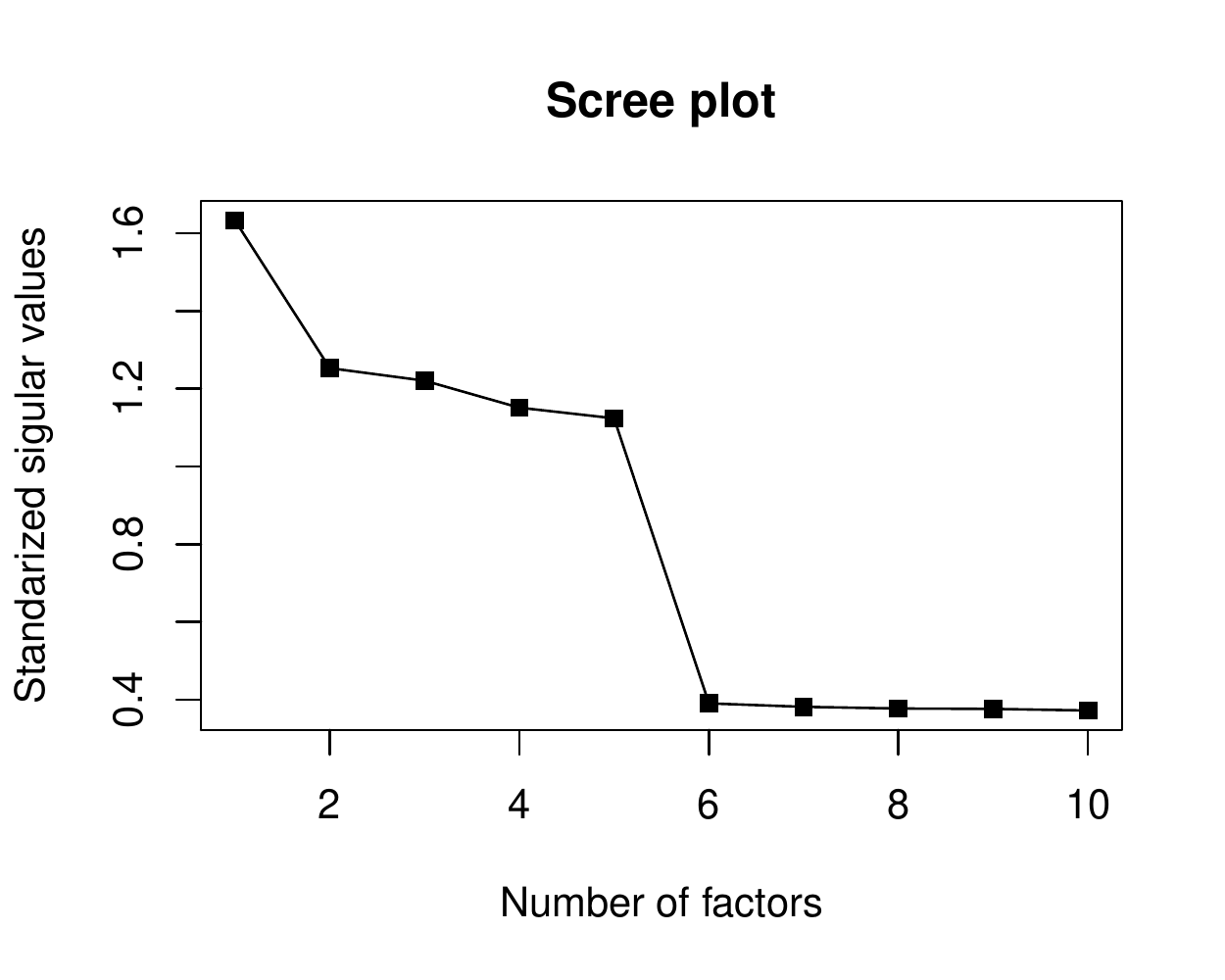}
   \caption{A scree plot for choosing the number of factors. The $y$-axis shows the standardized singular values $\hat \sigma_k/\sqrt{NJ}$, where $\hat \sigma_k$s are obtained from step 7 of Algorithm \ref{alg:SVD}. The data are simulated from an IFA model with $K=5$, $J = 200$, and $N = 4000$. The input dimension is set to be 10 in Algorithm~\ref{alg:SVD}. A singular value gap can be found between the 5th and 6th singular values.}\label{fig:fig3}
 \end{figure}

\paragraph{Statistical efficiency.}
We further point out that a price is paid for the computational advantage of the SVD-based estimator. To elaborate on this point, we compare it with the CJMLE \citep{chen2019joint,chen2017structured}.
The CJMLE treats both item parameters and latent factors as fixed parameters and maximizes a joint likelihood function with respect to all the fixed parameters.  
The SVD-based estimator is statistically less efficient than the CJMLE, in the sense that the SVD-based estimator converges to the true parameters in a much slower rate. To make this comparison, we consider
the same setting as in Proposition~\ref{prop:bounded}. The following proposition establishes the convergence rate for $\Vert X^* - \hat X\Vert_F^2/NJ,$
which determines the convergence of $\hat A$. Here, $X^* = (f(d_{j}^* + \aaaa_j^* (\ttt_i^*)^\top))_{N\times J}$ is the true item response probability matrix.

\begin{proposition}\label{prop:upper}
Suppose that the same assumptions as in Proposition~\ref{prop:bounded} hold and choose $\epsilon_{N,J}$ as in Proposition~\ref{prop:bounded}.
Then we have
\begin{equation}\label{eq:rate}
\frac{1}{NJ}\| X^* - \hat X \|_F^2  = O_p(J^{-\frac{1}{K+2}}).
\end{equation}

\end{proposition}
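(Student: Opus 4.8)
The plan is to reduce the claim to a matrix-denoising bound for the universal singular value thresholding estimator of \cite{chatterjee2015matrix} and then quantify it via the low-rank approximability of the probability matrix $X^* = (f(m_{ij}^*))$, where $m_{ij}^* = d_j^* + (\aaaa_j^*)^\top\ttt_i^*$. First I would dispose of the truncation in step~4. Under the assumptions of Proposition~\ref{prop:bounded}, condition A1 together with the bounded support gives $|m_{ij}^*| \le B_0 := C\sqrt{C_0^2+1}$, so every entry $x_{ij}^* = f(m_{ij}^*)$ lies in $[f(-B_0), f(B_0)] \subseteq [\epsilon, 1-\epsilon]$ by the choice \eqref{eq:epschoice}. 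Hence the map $x \mapsto \hat x$ in step~4 is the coordinatewise Euclidean projection onto the box $[\epsilon,1-\epsilon]$, a convex set containing $X^*$; projections onto convex sets are non-expansive, so $|\hat x_{ij} - x_{ij}^*| \le |x_{ij} - x_{ij}^*|$ entrywise and therefore $\|\hat X - X^*\|_F \le \|X - X^*\|_F$. It thus suffices to bound $\|X - X^*\|_F^2/(NJ)$, where $X$ is the rank-$\tilde K$ SVD truncation of $Y$ produced in steps~2--3.

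Next I would set up the random-matrix estimate behind the thresholding. Writing $Y = X^* + W$ with $W_{ij} = Y_{ij} - x_{ij}^*$, the entries of $W$ are, conditional on the (fixed) item parameters and the person parameters $\ttt_1^*,\dots,\ttt_N^*$, independent, mean zero, bounded by one and of variance at most $1/4$. Standard concentration for the operator norm of such a matrix gives $\|W\|_{\mathrm{op}} \le \sqrt N$ with probability tending to one when $N \ge J$ (condition A5), so that the threshold $\tau = 1.01\sqrt N$ exceeds it; this is precisely the fact that motivates the constant in step~3 via Theorem~1.1 of \cite{chatterjee2015matrix}. On this event, hard singular-value thresholding at level $\tau$ obeys the oracle inequality $\|X - X^*\|_F^2 \lesssim \mathrm{rank}(B)\,\tau^2 + \|X^* - B\|_F^2$ for every matrix $B$, which follows from Weyl's inequality and the Eckart--Young characterization of best low-rank approximation. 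The enforced constraint $\tilde K \ge K+1$ only ever retains at most $K+1$ extra directions lying below the threshold, adding at most $(K+1)\tau^2 = O(N)$, which is negligible after dividing by $NJ$ and hence does not affect the rate.

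It remains to exhibit a good low-rank $B$. Since $M^* = (m_{ij}^*)$ has rank at most $K+1$ and $|m_{ij}^*| \le B_0$, I would approximate the link on $[-B_0,B_0]$: by Jackson's theorem there is a degree-$p$ polynomial $q_p$ with $\sup_{|t|\le B_0}|f(t)-q_p(t)| \le C L/p$, using that $f$ is $C^1$ and $L$-Lipschitz (A3). Taking $B = q_p(M^*)$ applied entrywise, the Hadamard power $(M^*)^{\circ l}$ of a rank-$(K+1)$ matrix has rank at most $\binom{K+l}{l}$, so $\mathrm{rank}(B) \le \sum_{l=0}^{p}\binom{K+l}{l} = \binom{K+p+1}{K+1} = O(p^{K+1})$, while $\|X^* - B\|_F^2 \le NJ(CL/p)^2$. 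Substituting into the oracle inequality and dividing by $NJ$ yields $\|X-X^*\|_F^2/(NJ) \lesssim p^{K+1}/J + p^{-2}$. Choosing $p \asymp J^{1/(K+2)}$ makes the first (variance) term of order $J^{-1/(K+2)}$ and the second (bias) term of smaller order $J^{-2/(K+2)}$, giving the claimed $\|X^*-\hat X\|_F^2/(NJ) = O_p(J^{-1/(K+2)})$.

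I expect the main obstacle to be the oracle inequality of the second step: making Chatterjee's guarantee quantitative for our rectangular matrix with conditionally independent, non-identically-distributed entries, controlling $\|W\|_{\mathrm{op}}$ sharply enough that the deterministic threshold $1.01\sqrt N$ sits just above the noise level, and then combining the high-probability spectral bound with the complementary negligible event to produce an $O_p$ (rather than in-expectation) statement. A secondary, more routine difficulty is the bias--variance bookkeeping in the polynomial construction, in particular the Hadamard-rank count $\binom{K+p+1}{K+1}=O(p^{K+1})$ and verifying that the choice $p\asymp J^{1/(K+2)}$ indeed leaves the variance term dominant.
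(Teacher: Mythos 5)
Your proposal is correct in outline but follows a genuinely different route from the paper's. The paper proves the proposition by recycling its Lemma~\ref{lem:chatterjee} (Chatterjee's Theorem~1.1), an \emph{in-expectation} bound of the form $\EEE\left(\|\tilde X - X^*\|_F^2\mid X^*\right)/(NJ) \lesssim \|X^*\|_*/(J\sqrt N) + e^{-cN}$, followed by a nuclear-norm estimate from a $\delta$-net, piecewise-constant approximation of $f(M^*)$ (rank $\lesssim (C_0/\delta)^K$ from the net on the $\ttt$-side, bias $\lesssim L\delta$), the choice $\delta \asymp J^{-1/(K+2)}$, Chebyshev's inequality to convert to an $O_p$ statement, and an entry-counting argument to transfer the bound from $\tilde X$ to $\hat X$. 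You instead (i) dispose of the truncation by non-expansiveness of the projection onto $[\epsilon,1-\epsilon]$, which is cleaner than the paper's counting of entries falling outside the box and is valid since $x_{ij}^*\in[2\epsilon,1-2\epsilon]$; (ii) work on the high-probability event $\|W\|_{\mathrm{op}}<\tau$ and invoke a rank-based oracle inequality for hard singular-value thresholding; and (iii) certify low-rank approximability via Jackson's theorem together with the Hadamard-power rank count $\mathrm{rank}\left((M^*)^{\circ l}\right)\le\binom{K+l}{K}$, which is correct since $M^*$ has inner dimension $K+1$. Your bookkeeping ($p^{K+1}/J+p^{-2}$ with $p\asymp J^{1/(K+2)}$) delivers the claimed rate; in fact the balanced choice $p\asymp J^{1/(K+3)}$ would give the strictly faster rate $J^{-2/(K+3)}$, because your oracle inequality is quadratic in the approximation error while the paper's nuclear-norm bound is only linear in it. The one step you should not wave at is the oracle inequality itself: it is not Chatterjee's Lemma~3.5 (which yields the weaker $(\|A-B\|\,\|B\|_*)^{1/2}$ form the paper uses), and ``Weyl plus Eckart--Young'' alone does not produce it. You need the standard hard-thresholding perturbation lemma $\|X-X^*\|_F^2\le C(\eta)\sum_j\min\{\sigma_j(X^*)^2,\tau^2\}$ for $\tau\ge(1+\eta)\|W\|$, whose proof requires an additional low-rank/triangle-inequality comparison of $A_S$ with the rank-$|S|$ truncation of $X^*$; with the margin $\eta$ corresponding to the constant $1.01$ the resulting constant is large but universal, and Eckart--Young then converts the $\min$-form into your rank-plus-bias form. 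You correctly identify this as the main obstacle, and with that lemma supplied (together with your correct observations that the forced $\tilde K\ge K+1$ adds only $O(N)/(NJ)=O(1/J)$ and that the complement event contributes a loss bounded by one) the argument is complete.
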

On the other hand, as shown in \cite{chen2017structured}, the CJMLE achieves the optimal rate (in minimax sense)  for estimating $X^*$, that is,
$\| X^* - \hat X_{JML} \|_F^2/(NJ)  = O_p(J^{-1}),$
where $\hat X_{JML}$ denotes the CJMLE.
This result suggests that the SVD-based estimator converges in a much slower rate than the CJMLE.



\section{Extensions}

\paragraph{Dealing with missing data.}
With slight modification, Algorithm~\ref{alg:SVD}  can handle item response data with missing values. We use matrix $W = (w_{ij})_{N\times J}$ to indicate the data nonmissingness, where $w_{ij} = 1$ indicates the response $Y_{ij}$ is not missing and $w_{ij} =0$ otherwise.
The modified algorithm is described as follows.
\begin{algorithm}[SVD-based estimator for exploratory IFA with missing data]\label{alg:SVD2}
\
\begin{enumerate}
\item Input nonmissing indicator $W = (w_{ij})_{N\times J}$, nonmissing responses $\{y_{ij}: w_{ij} =1, i = 1, ..., N, j = 1, ..., J\}$, the number of factors $K$, inverse link function $f$, and truncation parameter $\epsilon_{N,J} > 0$.
\item Compute $\hat p = (\sum_{i=1}^N\sum_{j=1}^J w_{ij})/(NJ)$ as the proportion of observed responses.
\item For each $i$ and $j$, let $z_{ij} = y_{ij}$, if $w_{ij} = 1$, and $z_{ij} = 0$ if $w_{ij} = 0$.
\item Apply the singular value decomposition to $Z$ to obtain $Z = \sum_{j = 1}^J \sigma_j \uu_j\vv_j^\top$, where $\sigma_1 \geq ...\geq \sigma_J \geq 0$ are the singular values and $\uu_j$s and $\vv_j$s are left and right singular vectors, respectively.

\item
 Let $$X = (x_{ij})_{N \times J} = \frac{1}{\hat p}\sum_{k = 1}^{\tilde K} \sigma_k \uu_k\vv_k^\top,$$
where $\tilde K = \max\big\{K+1, \argmax_k\{\sigma_k \geq 1.01 \sqrt{N(\hat p+3\hat p(1-\hat p))}\}\big\}$.
\item Let $\hat X = (\hat x_{ij})_{N\times J}$ be defined as
$$
\hat x_{ij} =
\begin{cases}
\epsilon_{N,J}, \quad \text{if } x_{ij} < \epsilon_{N,J},\\
x_{ij}, \quad \text{if } \epsilon_{N,J} \leq x_{ij} \leq 1-\epsilon_{N,J},\\
1-\epsilon_{N,J}, \quad \text{if } x_{ij} > 1 - \epsilon_{N,J}.
\end{cases}
$$
\item Let $\tilde M = (\tilde m_{ij})_{N\times J},$ where $\tilde m_{ij} = f^{-1}(\hat x_{ij}).$
\item Let $\hat\dd = (\hat d_1,...,\hat d_J)$, where $\hat d_j = (\sum_{i=1}^N\tilde m_{ij})/N$.
\item Apply singular value decomposition to $\hat M = (\tilde m_{ij} - \hat d_j)_{N\times J}$ to have $\hat M = \sum_{j = 1}^J \hat\sigma_j \hat\uu_j\hat\vv_j^\top$, where $\hat\sigma_1 \geq ...\geq \hat\sigma_J \geq 0$ are the singular values and $\hat\uu_j$s and $\hat\vv_j$ are the left and right singular vectors, respectively.
\item Output $\hat A = \frac{1}{\sqrt{N}}(\hat \sigma_1 \hat\vv_1,...,\hat\sigma_K\hat\vv_K), \hat \Theta = \sqrt{N}(\hat\uu_1,...,\hat\uu_K).$

\end{enumerate}
\end{algorithm}

\begin{remark}
It is easy to see that $\hat p = 1$ when there is no missing data. In that case, Algorithm~\ref{alg:SVD2} becomes exactly the same as Algorithm~\ref{alg:SVD}.
Steps 2-5 essentially follow the same procedure of \cite{chatterjee2015matrix} for matrix completion and the rest of the steps are the same as those in Algorithm~\ref{alg:SVD}.
Specifically, missing data are first imputed by zero in step 3 of the algorithm.  The bias brought by the simple imputation procedure is corrected in Step 5, by multiplying the factor $1/\hat p$. Similar to Algorithm~\ref{alg:SVD}, the choice of $\tilde K$ in step 5 is determined by the  procedure of \cite{chatterjee2015matrix} with a small modification which guarantees $\tilde K \geq K+1$.
\end{remark}

In fact, when the entries of the item response matrix are missing completely at random,
using a similar proof, one can show that $\hat A$ given by Algorithm~\ref{alg:SVD2} is still consistent,
under some mild condition on the missing data mechanism 
and the same conditions as in Theorem~\ref{thm:consis}. Specifically, the following condition is needed, in addition to conditions A1-A5.
\begin{itemize}
\item[A6.] The $w_{ij}$s are independent and identically distributed from a Bernoulli distribution with $\Pr(w_{ij} = 1) = p,$ where $0<p\leq1$ is a constant which does not depend on $N$ and $J$.
\end{itemize}

Under conditions A1-A6, the following proposition holds that guarantees the consistency of the proposed SVD estimator.

\begin{proposition}\label{prop:missing}
Under the same conditions as Theorem \ref{thm:consis} plus condition A6, the estimate $\hat A$ given by Algorithm~\ref{alg:SVD2} satisfies
 $L_{N,J}(A^*,\hat A) \overset{pr}{\to} 0$, as $N, J \rightarrow \infty.$
\end{proposition}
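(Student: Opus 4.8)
The plan is to reduce the statement to the argument already carried out for Theorem~\ref{thm:consis}. The crucial observation is that Steps~6--10 of Algorithm~\ref{alg:SVD2} are \emph{identical} to Steps~4--8 of Algorithm~\ref{alg:SVD}: given the clipped probability estimate $\hat X$, the inverse-link transformation, the row-centering producing $\hat\dd$, the second SVD of $\hat M$, and the read-off of $\hat A$ proceed verbatim. Consequently, once we establish the matrix-estimation guarantee that Steps~2--6 produce an estimate $\hat X$ of the true probability matrix $X^* = (f(d_j^* + (\aaaa_j^*)^\top\ttt_i^*))_{N\times J}$ satisfying
\begin{equation*}
\frac{1}{NJ}\Vert X^* - \hat X\Vert_F^2 \overset{pr}{\to} 0,
\end{equation*}
the remainder is word-for-word the proof of Theorem~\ref{thm:consis}: conditions A1 and A3 control $\Vert \tilde M - M^*\Vert_F$ through the Lipschitz inverse link and bounded item parameters, the truncation enters exactly through $h(\cdot)$, $g(\cdot)$ and \eqref{eq:epsilon1}--\eqref{eq:epsilon2}, and the Davis--Kahan--Wedin theorem under A2 and A4 converts this into $L_{N,J}(A^*,\hat A)\overset{pr}{\to}0$. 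So the only new content is the display above.

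To establish it, write $Z = W\circ Y$ for the zero-imputed matrix of Step~3, where $\circ$ is the entrywise product. Conditionally on the true factors $\TTT^* = (\ttt_1^*,\dots,\ttt_N^*)$, the entries $z_{ij}=w_{ij}y_{ij}$ are independent (by A6, local independence, and the independence of $w_{ij}$ from $y_{ij}$ and from $\TTT^*$), lie in $\{0,1\}$, and satisfy $\EEE[z_{ij}\mid\TTT^*] = p\,X^*_{ij}$ with $\mathrm{Var}(z_{ij}\mid\TTT^*) = pX^*_{ij}(1-pX^*_{ij})$. Thus $Z$ is a bounded, independent-entry perturbation of the signal matrix $pX^*$, and the threshold $1.01\sqrt{N(\hat p+3\hat p(1-\hat p))}$ in Step~5 is exactly the matrix-completion analogue of the threshold $1.01\sqrt N$ in Algorithm~\ref{alg:SVD}, calibrated to the entrywise variance of $z_{ij}$ so that it dominates, with high probability, the spectral norm of the centered noise $Z-\EEE[Z\mid\TTT^*]$. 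Applying the matrix-completion procedure of \cite{chatterjee2015matrix} conditionally on $\TTT^*$ therefore yields $\frac{1}{NJ}\Vert \sum_{k=1}^{\tilde K}\sigma_k\uu_k\vv_k^\top - pX^*\Vert_F^2\overset{pr}{\to}0$, exactly as in Theorem~\ref{thm:consis} with $p=1$; since the constants controlling this bound are uniform in $\TTT^*$ under A1--A4, the convergence holds unconditionally.

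It remains to absorb the estimated rate $\hat p$, which appears both inside the threshold and as the rescaling $1/\hat p$ in Step~5. Since $NJ\hat p\sim\mathrm{Binomial}(NJ,p)$ under A6, Bernstein's inequality gives $\hat p = p + O_p((NJ)^{-1/2})$, so on an event of probability tending to one $\hat p$ is bounded away from $0$ and within $o(1)$ of $p$; the data-dependent threshold is then asymptotically indistinguishable from its population version, leaving the previous display intact. For the rescaling, the triangle inequality gives
\begin{equation*}
\frac{\Vert X - X^*\Vert_F}{\sqrt{NJ}} \le \frac{1}{\hat p}\cdot\frac{1}{\sqrt{NJ}}\Big\Vert\textstyle\sum_{k\le\tilde K}\sigma_k\uu_k\vv_k^\top-pX^*\Big\Vert_F + \Big|\tfrac1{\hat p}-\tfrac1p\Big|\cdot\frac{\Vert pX^*\Vert_F}{\sqrt{NJ}},
\end{equation*}
where, because $\Vert pX^*\Vert_F\le p\sqrt{NJ}$, the second term is $O_p((NJ)^{-1/2})=o_p(1)$ and the first vanishes by the previous paragraph. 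Finally, clipping $X$ to $[\epsilon_{N,J},1-\epsilon_{N,J}]$ to form $\hat X$ is a nonexpansive projection onto a convex set, so it does not increase the error beyond the truncation bias already governed by \eqref{eq:epsilon1}--\eqref{eq:epsilon2}, giving the required display.

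The hard part will be the matrix-completion step of the second paragraph: one must verify that the hypotheses behind Chatterjee's completion bound genuinely hold for the centered, zero-imputed matrix $Z-\EEE[Z\mid\TTT^*]$ — in particular that the variance proxy underlying the $\sqrt{N(\hat p+3\hat p(1-\hat p))}$ threshold correctly bounds its spectral norm — and then propagate the fluctuation of $\hat p$ through both the threshold selection and the $1/\hat p$ rescaling so that it contributes only a lower-order $O_p((NJ)^{-1/2})$ term. Once this single guarantee is secured, every subsequent step is inherited unchanged from Theorem~\ref{thm:consis}.
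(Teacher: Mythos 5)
Your high-level plan---establish a matrix-completion guarantee for Steps 2--6 of Algorithm~\ref{alg:SVD2}, then inherit everything downstream from the complete-data analysis---has the right shape, and it matches the paper's actual structure: the paper writes a single argument covering general $p$ (Lemma~\ref{lem:P consis} plus Lemma~\ref{lem:A consis}) and obtains Theorem~\ref{thm:consis} as the special case $p=1$. But this also exposes a circularity in your reduction as written: in this paper the ``proof of Theorem~\ref{thm:consis}'' that you propose to reuse word-for-word is literally a pointer to the proof of Proposition~\ref{prop:missing}, so appealing to it supplies no content. More substantively, you have misplaced the difficulty. The $\hat p$-concentration and the calibration of the threshold $1.01\sqrt{N(\hat p+3\hat p(1-\hat p))}$ are dispatched by the paper in a two-sentence remark: Lemma~\ref{lem:chatterjee} simply absorbs the constant $p$ into the constants of Theorem 1.1 of \cite{chatterjee2015matrix}, with a small additional term coming from forcing $\tilde K\geq K+1$. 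What actually carries the proof, and what appears nowhere in your proposal, is: (i) the $\delta$-net low-rank approximation of $X^*=f(M^*)$ on the event $\{\|\ttt_i^*\|\leq h(2\epsilon)/C \text{ for all } i\}$, which bounds $\|X^*\|_*$ by $c(C/\delta)^{(K+1)/2}\sqrt{NJ}+L\delta(\sqrt{C_\epsilon^2+1}+C)\sqrt{J}\sqrt{NJ}$ and, after optimizing $\delta$, turns Chatterjee's $\min\{\|X^*\|_*/(J\sqrt N),\dots\}$ into the rate $C_\epsilon^{(K+1)/(K+3)}J^{-1/(K+3)}$---without this bound the USVT guarantee is not $o(1)$ and nothing converges; (ii) the passage from $\hat X$ to $\hat M$, where the factor $1/(g(\epsilon))^2$ enters and where the estimated intercepts $\hat d_j$ create a cross term $\mathrm{tr}(H_1^\top H_2)$ that must be controlled using $\hat M^\top\E_N=\0_J$ and the law of large numbers for $\sum_i\ttt_i^*$; and (iii) the final step from $\|\hat\Theta\hat A^\top-\Theta^*(A^*)^\top\|_F$ to $L_{N,J}(A^*,\hat A)$, which the paper performs not by citing Davis--Kahan but with the explicit rotation $Q^{(N,J)}=N^{-1/2}\hat\Theta^\top\Theta^*((\Theta^*)^\top\Theta^*)^{-1/2}$, a reduction to $\Sigma=I_K$, and singular-value concentration of $\Theta^*$. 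These are exactly the places where \eqref{eq:epsilon1}--\eqref{eq:epsilon2} and conditions A1--A5 are consumed, so they cannot be waved through.

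A smaller point: your claim that clipping to $[\epsilon_{N,J},1-\epsilon_{N,J}]$ is harmless because it is a nonexpansive projection requires the true probabilities $x^*_{ij}$ to already lie in that interval; this holds only on the high-probability event guaranteed by \eqref{eq:epsilon1}, which is the same event needed for (i). The paper handles both on a single event $\mathcal B_{N,J}$, on which $x^*_{ij}\in[2\epsilon,1-2\epsilon]$, and your argument would need to do the same rather than treat the clipping as unconditionally benign.
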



\paragraph{Dealing with ordinal data.} In exploratory IFA, ordinal data are also commonly encountered,
due to the wide use of Likert-scale items. With slight modification, the SVD method can also be used to analyze ordinal data. This is achieved by applying Algorithm~\ref{alg:SVD} to multiple dichotomized versions of data.

More precisely, consider data $Y = (Y_{ij})_{N\times J}$, where $Y_{ij} \in \{0, 1, ..., T\}$. We consider a general family of graded response type models, 
\begin{equation}\label{eq:ordinal}
\Pr(Y_{ij}\geq t \vert \ttt_i) = f(d_{jt} + \aaaa_j^\top \ttt_i),
\end{equation}
where $d_{jt}$ is an item- and category-specific intercept parameter,
and the rest of the notations are the same as that of model~\eqref{eq:IRF}. Note that the linear combination of the factors $\aaaa_j^\top \ttt_i$
does not depend on the response category and appears in all the submodels $\Pr(Y_{ij}\geq t \vert \ttt_i)$ for $t = 1, ..., T$.
When $f(x) = \exp(x)/(1+\exp(x))$ takes the logistic form,  model~\eqref{eq:ordinal}
becomes the multidimensional graded response model \citep{muraki1995full}.

Model~\eqref{eq:ordinal} is closely related to the general model \eqref{eq:IRF} for binary data. In fact, if we dichotomize data at response category $t$, i.e.,
$Y_{ij}^{(t)} = 1_{\{Y_{ij} \geq t\}}$,
then binary data $Y_{ij}^{(t)}$ follows model \eqref{eq:IRF} with the same loading parameters. Therefore, the loading matrix $A$ can be estimated by applying Algorithm~\ref{alg:SVD} to dichotomized data $Y^{(t)} = (1_{\{y_{ij}\geq t\}})_{N\times J}$, for some $t = 1, ..., T$. The estimation accuracy may be further improved by aggregating the results from multiple dichotomized versions of data. This aggregation method is summarized by Algorithm~\ref{alg:SVD3} below.

\begin{algorithm}[SVD-based estimator for exploratory IFA with ordinal data]\label{alg:SVD3}
\
\begin{enumerate}
\item Input response $Y = (y_{ij})_{N\times J}$, the number of categories $T$, the number of factors $K$, inverse link function $f$, and truncation parameter $\epsilon_{N,J} > 0$.
\item For $t = 1, ..., T$, apply Algorithm~\ref{alg:SVD} to dichotomized data $Y^{(t)} = (1_{\{y_{ij}\geq t\}})_{N\times J}$ and obtain
$\hat M^{(t)}$ from step 7 of Algorithm~\ref{alg:SVD}.
\item Let $\hat M = (\sum_{t=1}^T \hat M^{(t)})/{T}$. Apply singular value decomposition to $\hat M $ and obtain $\hat M = \sum_{j = 1}^J \hat\sigma_j \hat\uu_j\hat\vv_j^\top$, where $\hat\sigma_1 \geq ...\geq \hat\sigma_J \geq 0$ are the singular values and $\hat\uu_j$s and $\hat\vv_j$ are left and right singular vectors, respectively.
\item Output $\hat A = \frac{1}{\sqrt{N}}(\hat \sigma_1 \hat\vv_1,...,\hat\sigma_K\hat\vv_K), \hat \Theta = \sqrt{N}(\hat\uu_1,...,\hat\uu_K).$
\end{enumerate}
\end{algorithm}

\section{Simulation}\label{sec:sim}

\paragraph{Simulation setting.} We consider $K = 4$ and $8$, $J = 200, 400, 600, 800$, $1000$, and $1200$, and $N = 20J$. For each combination of $N$, $J$, and $K$, two different latent distributions $F$ are considered, one is a $K$-variate standard normal distribution, and the other is a $K$-variate normal distribution $N(\mathbf 0, (\sigma_{ij})_{K\times K})$, where
$\sigma_{ij} = 1$ if $i=j$ and $\sigma_{ij} = 0.3$ if $i\neq j$.
The inverse link $f$ is chosen to be logistic,
i.e. $f(x) = \exp(x)/(1+\exp(x))$.
This leads to 24 different simulation settings, for all possible combinations of
$N$, $J$, $K$, and $F$.

For each simulation setting, 100 independent replications are generated, with the item parameters keeping fixed across replications. When $J = 200$ and given $K$, the item parameters are generated as follows.
\begin{enumerate}
  \item $d_1^*$, ..., $d_{200}^*$ are i.i.d. from a uniform distribution over interval $[-1,1]$.
  \item $\aaaa_1^*$, ..., $\aaaa_{200}^*$ are i.i.d., with $\aaaa_j^* = (a_{j1}^\dagger q_{j1}, ..., a_{jK}^\dagger  q_{jK})^\top$. Here $a_{jk}^\dagger$s are i.i.d. from a uniform distribution over interval $[1,2]$, and $\mathbf q_{j} = (q_{j1}, ..., q_{jK})^\top$ are i.i.d. from a uniform distribution over $\mathcal Q_K$. Specifically,
      $$\mathcal Q_4 = \left\{(q_1, ..., q_4)^\top: q_k \in \{0,1\}, \sum_{k=1}^4 q_k \geq 1, \mbox{~and~} \sum_{k=1}^4 q_k \leq 3\right\},$$
      and
      $$\mathcal Q_8 = \left\{(q_1, ..., q_8)^\top: q_k \in \{0,1\}, \sum_{k=1}^8 q_k \geq 1, \mbox{~and~} \sum_{k=1}^8 q_k \leq 3 \right\}.$$
      The $\mathbf q_{j}$s lead to sparse loading vectors.
\end{enumerate}
When $J > 200$, we set the item parameters by repeating multiple times the parameters under $J = 200$ and the same $K$. For example, when $J = 400$, we set parameters for items 1-200 and those for items 201-400 to be the same as the parameters generated under the setting $J = 200$.

\paragraph{Results.} Each simulated dataset is analyzed using the SVD-based estimator, with the truncation parameter $\epsilon_{N,J}$ set to be $10^{-4}$. The performance of the SVD-based estimator is compared with that of the CJMLE.\footnote{The CJMLE is implemented using R package \emph{mirtjml} \citep{mirtjml}. All the computation is conducted on a single Intel\circledR Gold 6130 core.} 

\begin{figure}[h]
\centering
\begin{subfigure}[b]{0.3\textwidth}
\centering
\includegraphics[scale = 0.4]{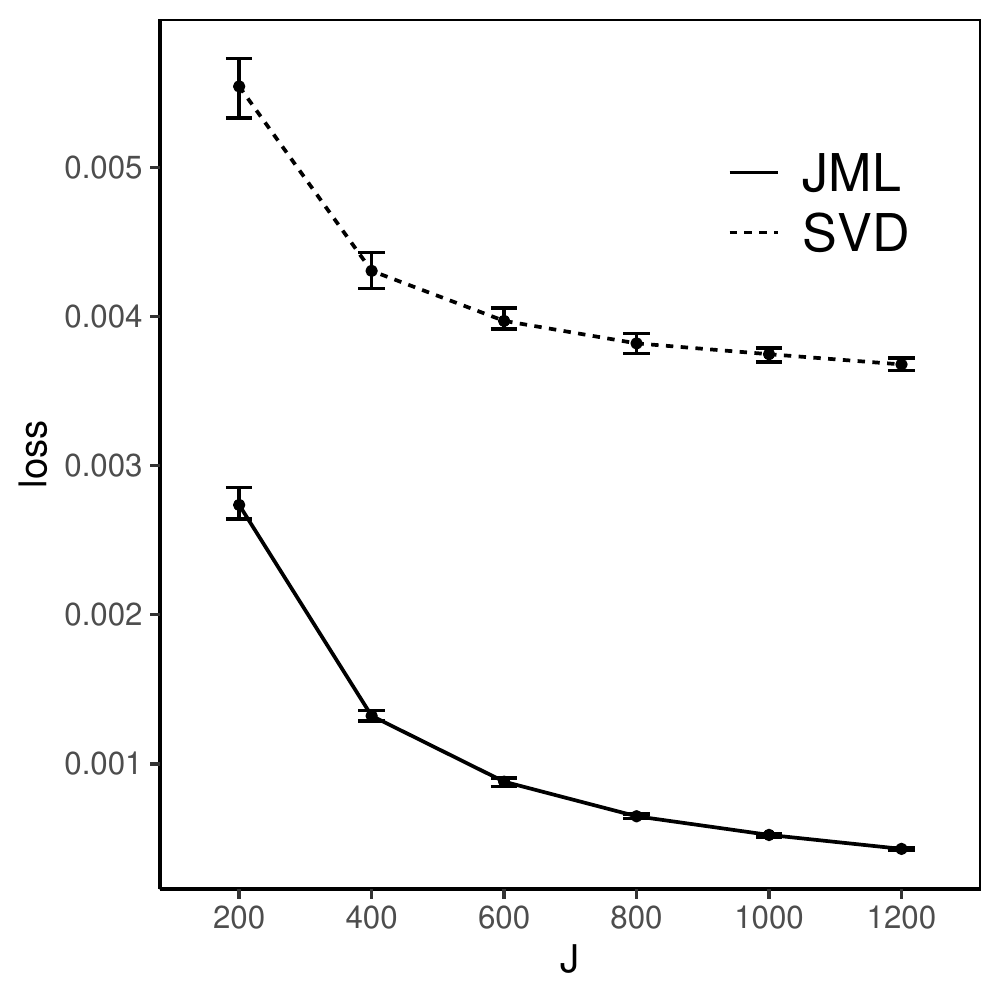}
\caption{}
\label{fig:fig1}
\end{subfigure}
\begin{subfigure}[b]{0.3\textwidth}
\includegraphics[scale = 0.4]{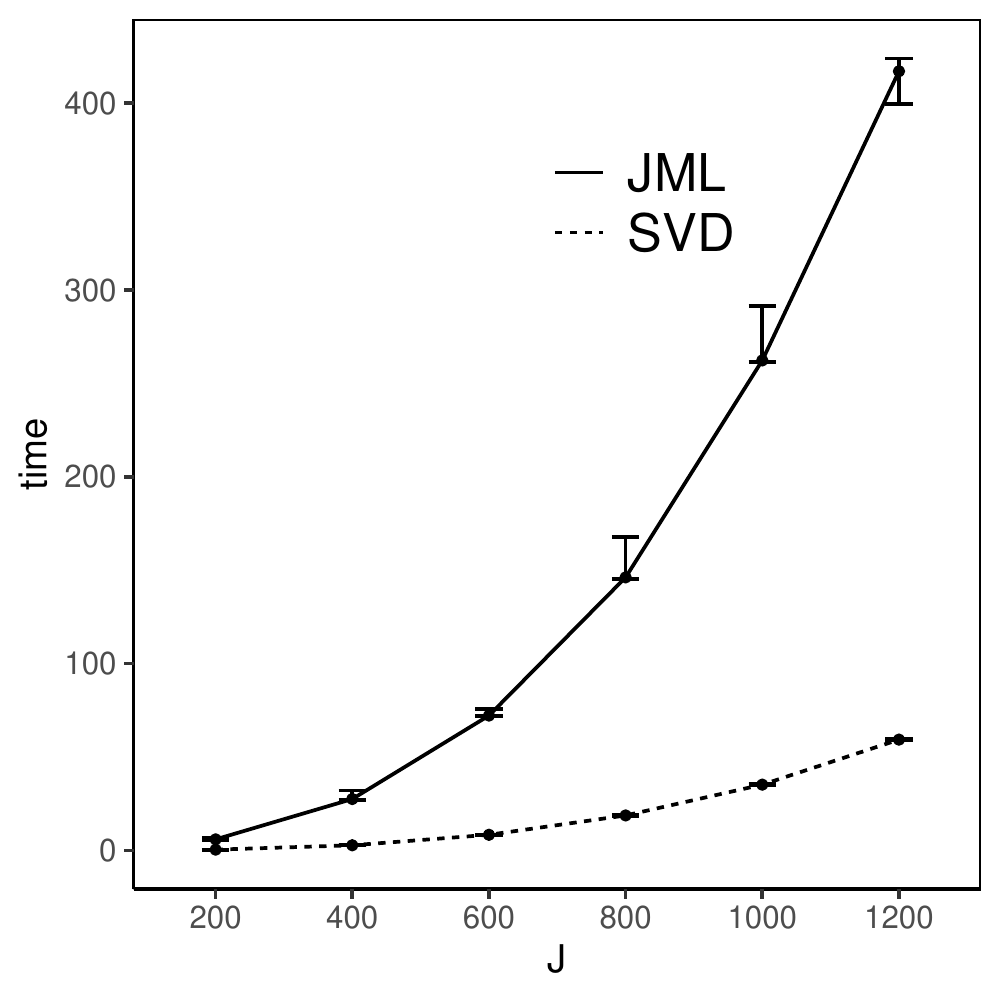}
\caption{}
\label{fig:fig2}
\end{subfigure}
\caption{\small Simulation results when $K=4$ and the true factors are independent. Panel (a) shows the number of items $J$ in $x$-axis versus the loss~\eqref{eq:ave} in $y$-axis and Panel (b) shows the number of items $J$ in $x$-axis versus the computation time (in seconds) in $y$-axis.
For each metric and each method, we show the median, 25\% quantile and 75\% quantile based on the 100 independent replications.}\label{fig:plot_K4_Gaussian_alpha0}
\end{figure}

\begin{figure}[h]
\centering
\begin{subfigure}[b]{0.3\textwidth}
\centering
\includegraphics[scale = 0.4]{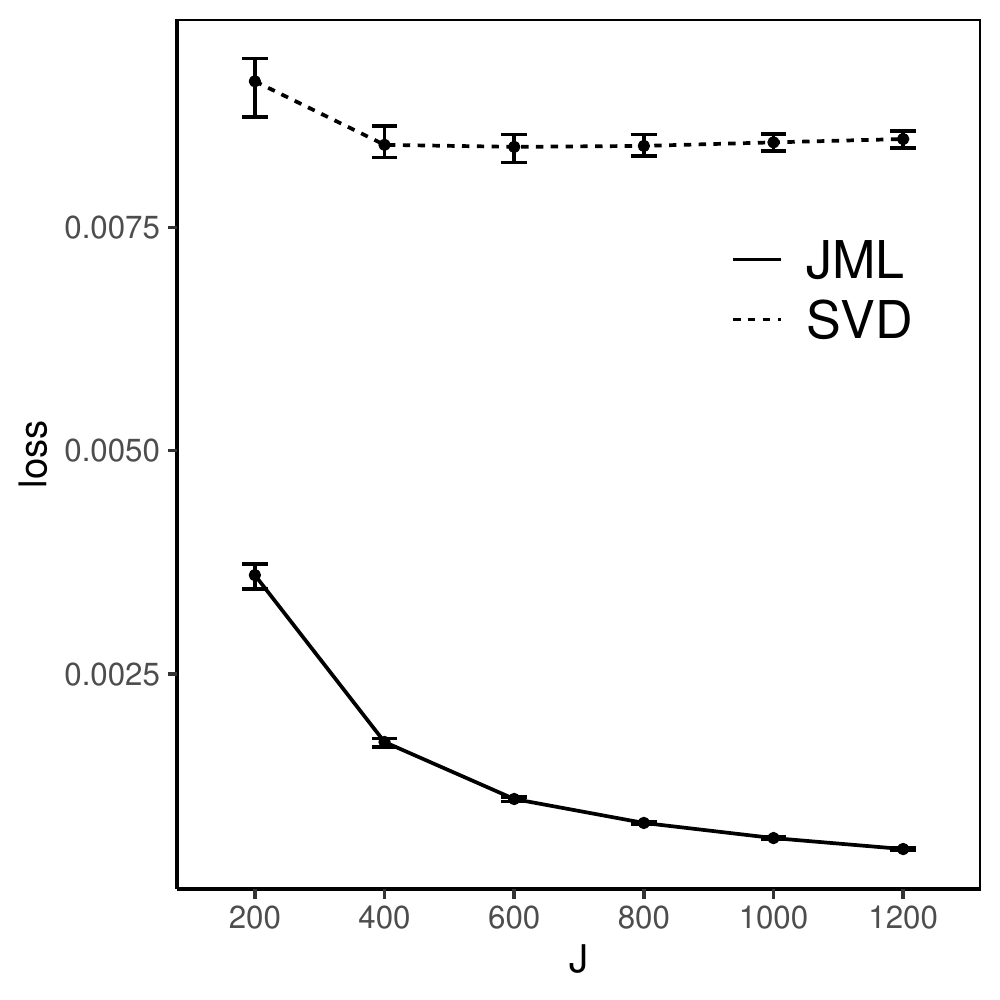}
\caption{}
\label{fig:fig1}
\end{subfigure}
\begin{subfigure}[b]{0.3\textwidth}
\includegraphics[scale = 0.4]{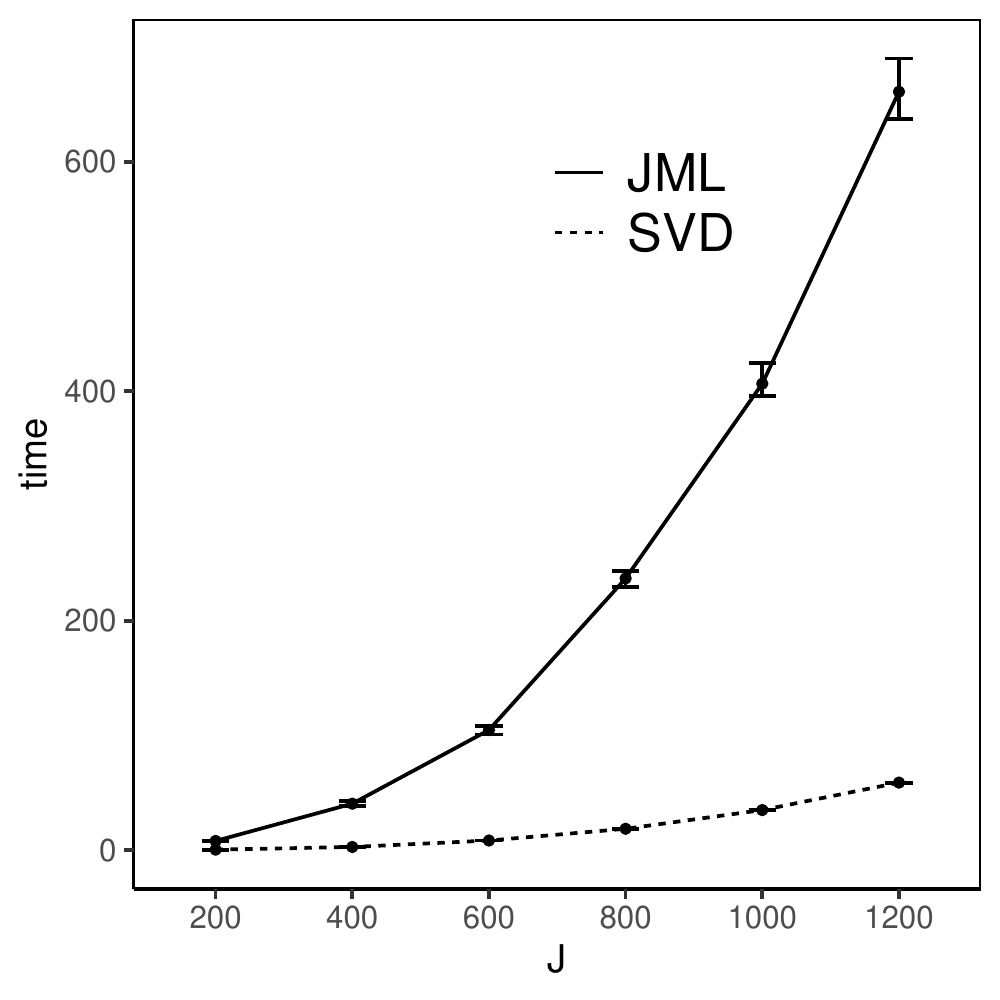}
\caption{}
\label{fig:fig2}
\end{subfigure}
\caption{\small Simulation results when $K=4$ and the true factors are correlated. The two panels show the same metrics as in Figure~\ref{fig:plot_K4_Gaussian_alpha0}.}\label{fig:plot_K4_Gaussian_alpha03}.
\end{figure}


\begin{figure}[h]
\centering
\begin{subfigure}[b]{0.3\textwidth}
\centering
\includegraphics[scale = 0.4]{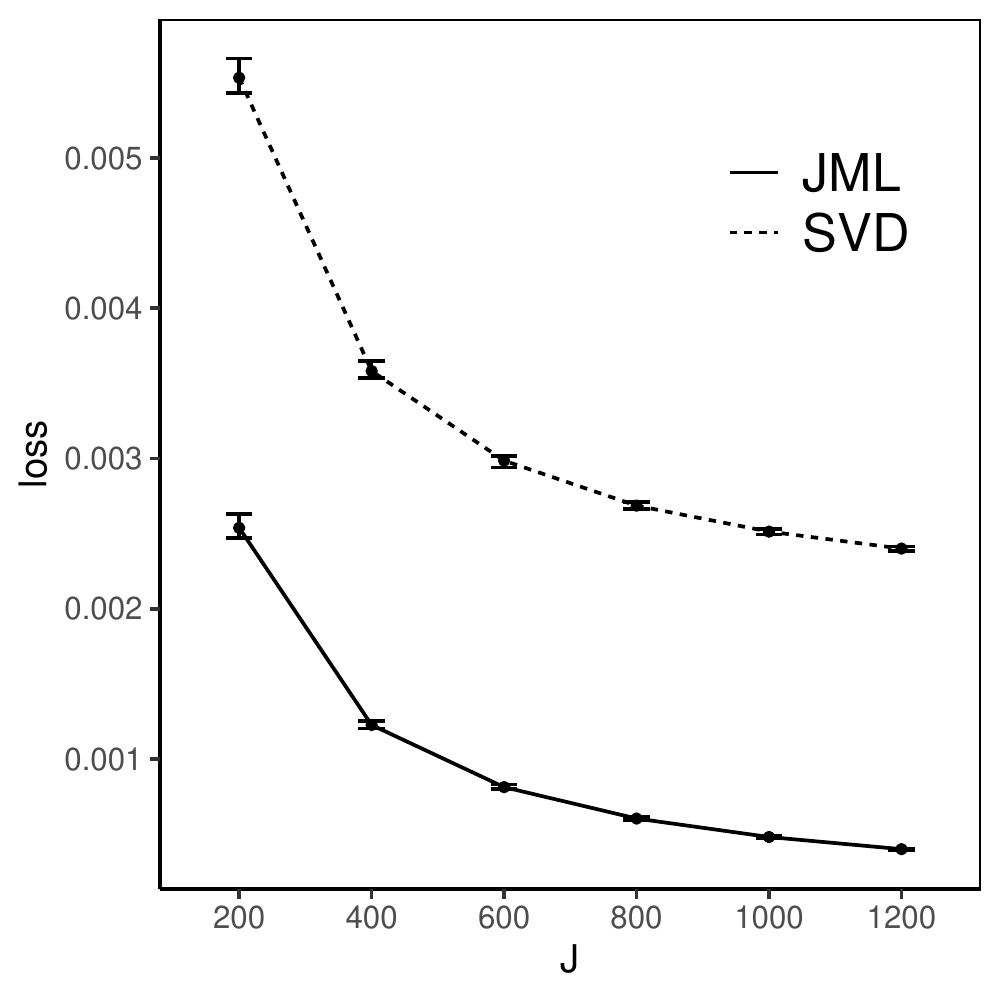}
\caption{}
\label{fig:fig1}
\end{subfigure}
\begin{subfigure}[b]{0.3\textwidth}
\includegraphics[scale = 0.4]{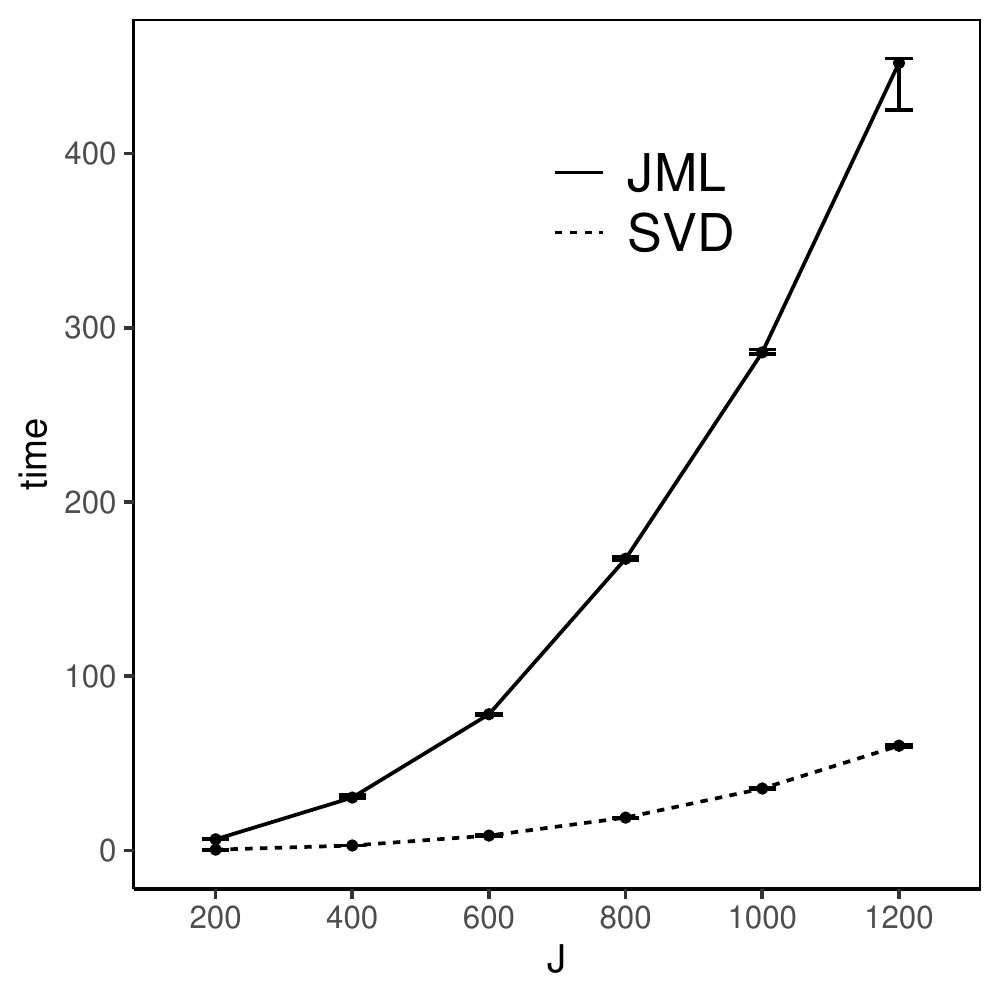}
\caption{}
\label{fig:fig2}
\end{subfigure}
\caption{\small Simulation results when $K=8$ and the true factors are independent. The two panels show the same metrics as in Figure~\ref{fig:plot_K4_Gaussian_alpha0}. }\label{fig:plot_K8_Gaussian_alpha0}
\end{figure}

\begin{figure}[h]
\centering
\begin{subfigure}[b]{0.3\textwidth}
\centering
\includegraphics[scale = 0.4]{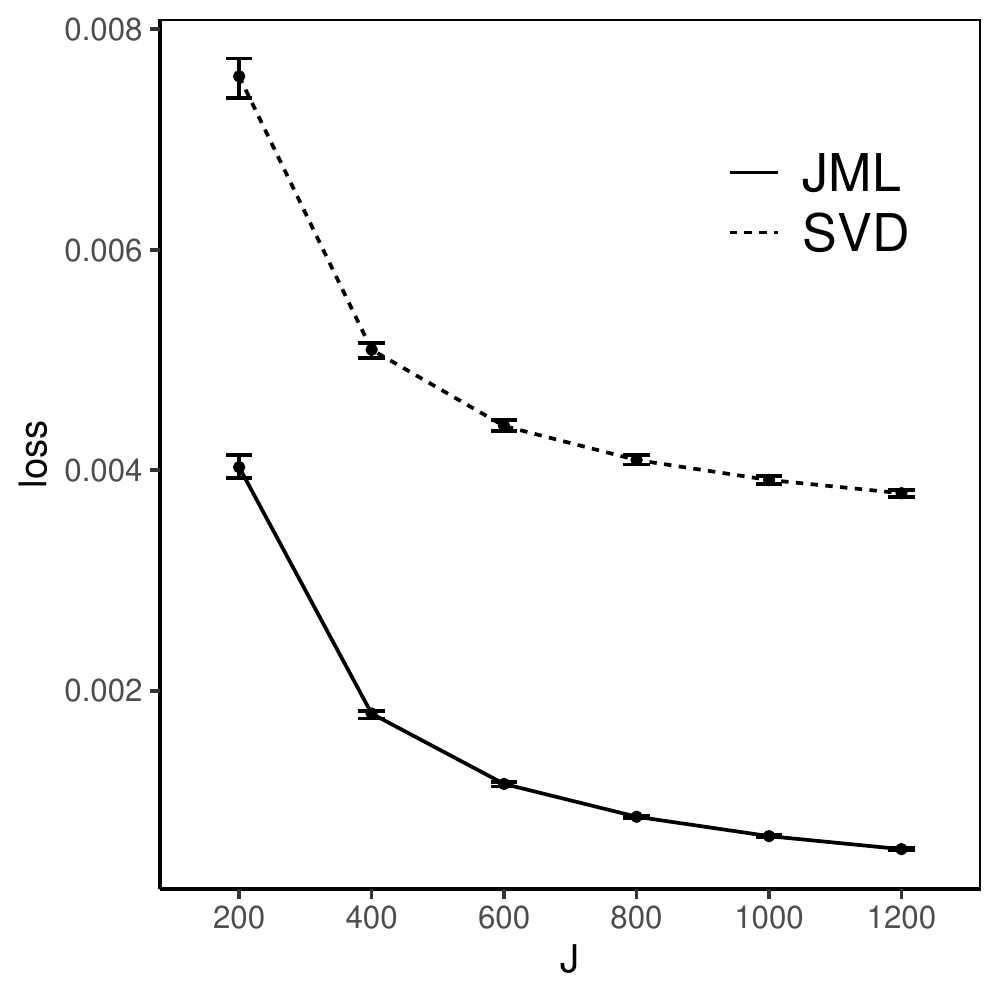}
\caption{}
\label{fig:fig1}
\end{subfigure}
\begin{subfigure}[b]{0.3\textwidth}
\includegraphics[scale = 0.4]{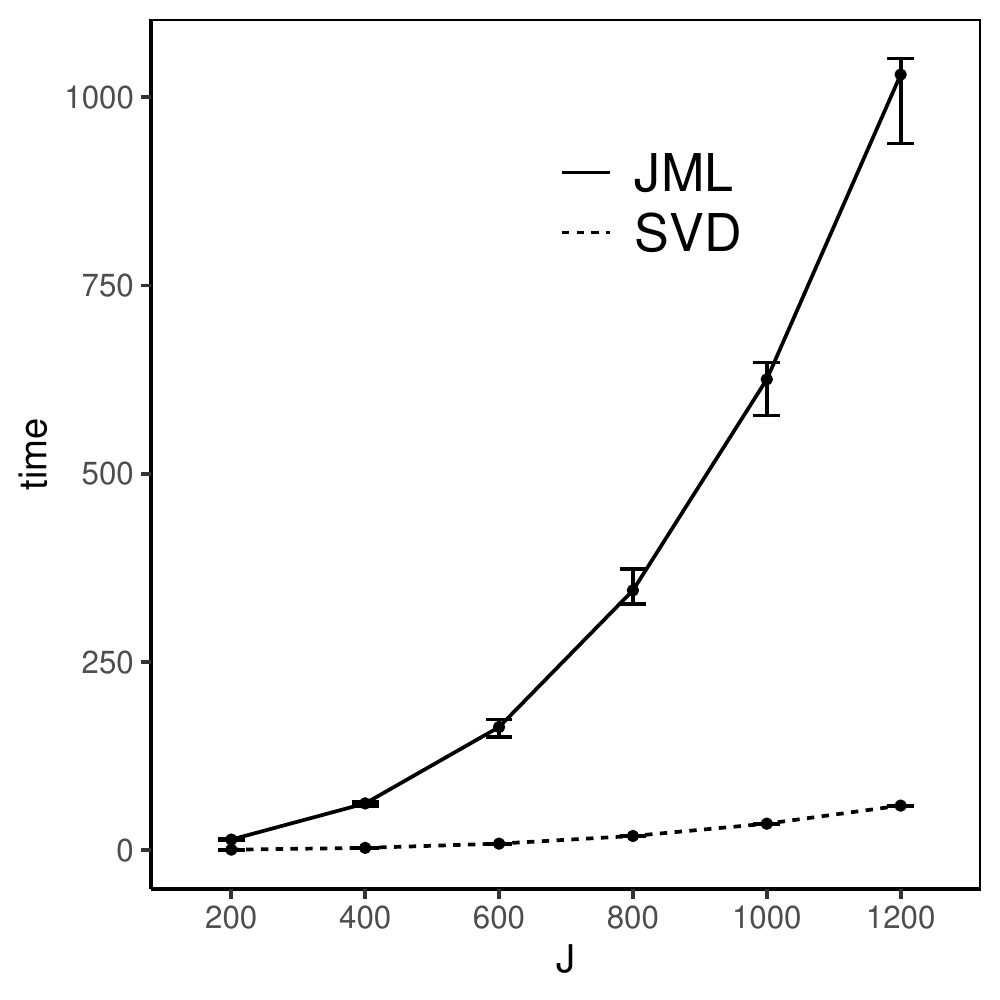}
\caption{}
\label{fig:fig2}
\end{subfigure}
\caption{\small Simulation results when $K=8$ and the true factors are correlated. The two panels show the same metrics as in Figure~\ref{fig:plot_K4_Gaussian_alpha0}.}\label{fig:plot_K8_Gaussian_alpha03}
\end{figure}

The loss for the SVD-based estimator decreases when $N$ and $J$ simultaneously grow, under all settings. Reasonable accuracy can be achieved when $N$ and $J$ are reasonably large, in which case the SVD-based estimator may be directly used for data analysis. For example, under the setting that $K = 4$ and $F$ is multivariate standard normal, the loss function is already around 0.006 when $J$ is 200. It suggests that the average entrywise error is around $0.08$.
In addition, the loss for the SVD-based estimator tends to be smaller when the factors are independent than that
when they are correlated, for the same $N$, $J$, and $K$. This is because, the signal in the data is weaker in the latter case, due to the redundant information in correlated factors.

Moreover, we compare the performance of the two estimators. The CJMLE is always more accurate than the SVD-based estimator. This is consistent with the asymptotic theory that the CJMLE is statistically more efficient. However, if we compare the computation time of the two approaches, the SVD-based estimator is substantially faster. Under the most time consuming setting where
$J = 1200, K = 8$ and the factors are correlated, the SVD approach only takes about 60 seconds, while the CJMLE takes about 17 minutes.
Note that as shown in \cite{chen2019joint},
CJMLE is already substantially faster than the marginal maximum likelihood estimator.
Given its reasonable accuracy and computational advantage, the SVD-based estimator may be a good alternative to the CJMLE and the MMLE in large-scale exploratory IFA problems.

\section{Concluding Remarks}\label{sec:con}

As shown in this note, the proposed SVD-based algorithm is statistically consistent and has good finite sample performance in large-scale
exploratory IFA problems. Although not statistically most efficient, the algorithm has its unique strengths over other exploratory IFA methods. In particular, it is computationally much faster. In addition, it guarantees a unique solution, while most of the other estimators can suffer from convergence issues for involving nonconvex optimization, including the CJMLE and MMLE.

Given its computational advantages and good finite sample performance, the SVD-based estimator can be used, not only as a starting point for other estimators to improve their numerical convergence, but also as an alternative estimator for data analysis. Specifically, in large-scale
exploratory IFA applications, we suggest to start data exploration with the SVD-based estimator. Using this estimator, we can
quickly gain some understanding about the number of factors underlying the data, and the loading structures of IFA models assuming different numbers of factors. Such initial knowledge helps us to focus on a smaller set of latent dimension $K$.
For these latent dimensions, we tend to further investigate their loading structures by the CJMLE, using the corresponding SVD solutions as starting points. When sample and item sizes are relatively smaller, the traditional methods may be more suitable, such as the MMLE and the composite-likelihood-based estimator. 




One limitation of the SVD-based estimator is that it is not easy to make statistical inference on the estimated loading matrix, such as constructing a confidence interval for an estimated loading parameter. 
This type of inference problem is not an issue for estimators based on the marginal likelihood, for which the asymptotic regime let
$N$ diverge and keep $J$ fixed.
However, it is a general challenge for both the SVD-based estimator and the CJMLE, whose consistency relies on a double asymptotic regime and the notion of consistency is weaker than  that in the traditional sense. In recent years, this type of inference problems has received much attention in statistics \citep{chen2019inference,xia2019statistical}. However, to the best of our knowledge,
no results have been obtained under an IFA model. We leave this problem for future investigation.


\vspace{3cm}

\noindent
{\LARGE Appendix}

\appendix

\numberwithin{equation}{section}

\section{Notations}\label{Sec:notation}

Let $\Theta^* = (\ttt_1^*,...,\ttt_N^*)^\top =  (\theta_{ik})_{N\times K}, A^* = (\aaaa_1^*,...,\aaaa_J^*)^\top =  (a_{jk})_{J\times K},$ and $\dd^* = (d_1^*,...,d_J^*)$ denote the true person parameters, factor loadings and intercept parameters, respectively.
We also denote
$\ttt_i^+ = (1,(\ttt_i^*)^\top)^\top, \aaaa_j^+ = (d_j^*, (\aaaa_j^*)^\top)^\top, \quad \text{for } i=1,...,N,\quad j = 1,...,J.$
We use  $\E_N, \0_N$ to denote $N$ dimensional vectors with all entries being $1$ and $0$ respectively, and
$B^{(K)}_{\aaaa}(C)$ to denote the ball in $\R^K$ centered at $\aaaa \in \R^K$ with radius $C.$
For a matrix $Z = (z_{ij})_{m \times n}$ and a function $f: \mathbb{R}\to\mathbb{R}$, let $f(Z) := (f(z_{ij}))_{m\times n}$.
Let $\sigma_k(Z)$ denote the $k$-th largest singular value of $Z,$ and $\|Z\|, \| Z \|_*$ denote the spectrum norm and nuclear norm of $Z$, which is the largest singular value and the sum of all singular values, respectively.
If $Z$ is a square matrix, let $\lambda_k(Z)$ denote the $k$-th largest eigenvalue of $Z.$

We denote $$X^* := (x_{ij}^*)_{N\times J} = f(\Theta^*(A^*)^\top + \E_N (\dd^*)^\top)$$ as the true probability matrix
and define $\tilde X = (\tilde x_{ij})_{N\times J}$ by
$$
\tilde x_{ij} =
\begin{cases}
0, \quad \text{if } x_{ij} < 0,\\
x_{ij}, \quad \text{if } 0 \leq x_{ij} \leq 1,\\
1, \quad \text{if } x_{ij} > 1,
\end{cases}
$$ where $x_{ij}$ is defined in step 5 of Algorithm \ref{alg:SVD2}.

Throughout the proof, we use $c$ to denote constant, whose value may change from line to line or even within a line.
We will drop the subscripts in $\epsilon_{N,J}$ and write $\epsilon$ for notional simplicity.

\section{Proof of Theorems}
\begin{proof}[Proof of Theorem \ref{thm:consis}]
Since Theorem \ref{thm:consis} is a special case of Proposition \ref{prop:missing} when $p = 1$ and $W = \E_N\E_J^\top,$ we refer the readers to the proof of Proposition \ref{prop:missing}.
\end{proof}

\begin{proof}[Proof of Theorem \ref{thm:singular}]
Let $\sigma^*_k$ denote the $k$th largest singular value of $\Theta^*(A^*)^\top.$ Then we have
\begin{equation} \label{eq:singular temp1}
|\hat \sigma_k - \sigma^*_k| \leq \|\hat M - \Theta^*(A^*)^\top \|_2 \leq \| \hat M - \Theta^*(A^*)^\top \|_F.
\end{equation}
By \eqref{eq:ineq1} in the proof of Lemma \ref{lem:P consis}, we can get
\begin{equation}\label{eq:singular temp2}
\frac{1}{\sqrt{NJ}} \| \hat M - \Theta^*(A^*)^\top \|_F \overset{pr}{\to} 0.
\end{equation}
Notice that \eqref{eq:singular temp2} holds as long as the input dimension in the algorithm is fixed.
Combine \eqref{eq:singular temp1} and \eqref{eq:singular temp2} to have
$$\frac{|\hat \sigma_k - \sigma_k^*|}{\sqrt{NJ}} \overset{pr}{\to} 0.$$
Notice that $\sigma_{K+1}^* = 0$ and we get $$\frac{\hat \sigma_{K+1}}{\sqrt{NJ}}  \overset{pr}{\to} 0.$$ For $k = K,$ we get $$\Pr\left( \frac{|\hat \sigma_K - \sigma_K^*| }{\sqrt{NJ}} \leq \tilde\epsilon \right) \to 1$$ for any $\tilde\epsilon>0$ and thus
\begin{equation}\label{eq:sigma ineq}
\Pr\left( \frac{\hat \sigma_K }{\sqrt{NJ}} \geq \frac{1}{\sqrt{NJ}} \sigma^*_K - \tilde\epsilon \right) \to 1.
\end{equation}
For $\sigma_K^*,$ we have
\begin{equation}\label{eq:sigma bound}
\begin{aligned}
\frac{1}{\sqrt{NJ}} \sigma_K^* &= \frac{1}{\sqrt{NJ}}\sigma_K(\Theta^*(A^*)^\top)\\
&\geq \frac{1}{\sqrt{N}}\sigma_K(\Theta^*) \frac{1}{\sqrt{J}}\sigma_K(A^*)\\
&\geq C_1 \frac{1}{\sqrt{N}}\sigma_K(\Theta^*).
\end{aligned}
\end{equation}
The last inequality is due to condition A4.
Let $\hat \Sigma = \frac{1}{N}\sum_{i=1}^N \ttt_i^* (\ttt_i^*)^\top$ and it is not hard to verify that $$\frac{1}{\sqrt{N}}\sigma_K(\Theta^*) = \sqrt{\lambda_K(\hat \Sigma)}.$$ By law of large number, we know $$\| \hat \Sigma - \Sigma^* \|_2 \overset{pr}{\to} 0$$ which leads to $$\lambda_K(\hat \Sigma) \overset{pr}{\to} \lambda_K(\Sigma^*) > 0$$ and thus
\begin{equation}\label{eq:sigma theta bound}
\frac{1}{\sqrt{N}}\sigma_K(\Theta^*) = \sqrt{\lambda_K(\hat \Sigma)} \overset{pr}{\to} \sqrt{\lambda_K(\Sigma^*)} > 0.
\end{equation}
Combining \eqref{eq:sigma ineq}, \eqref{eq:sigma bound}, \eqref{eq:sigma theta bound} and choosing $\tilde\epsilon = \frac{1}{2}\sqrt{\lambda_K(\Sigma^*)},$ we have $$\Pr\left( \frac{\hat \sigma_K }{\sqrt{NJ}} > \frac{1}{4} \sqrt{\lambda_K(\Sigma^*)} \right) \to 1.$$ We complete the proof by choosing $\delta = \frac{1}{4} \sqrt{\lambda_K(\Sigma^*)}.$
\end{proof}

\section{Proof of Propositions}
\begin{proof}[Proof of Proposition \ref{prop:bounded}]
According to the choice of $\epsilon$, we have
$h(2\epsilon) \geq C\sqrt{C_0^2+1}.$
Then,	
\begin{align*}
\Pr\left( \|\ttt_1^*\| \geq h(2\epsilon)/C  \right) = 0,
\text{ and } \frac{(h(2\epsilon_{N,J}))^{\frac{K+1}{K+3}}}{(\epsilon_{N,J} g(\epsilon_{N,J}))^2} = O(1).
\end{align*}
We complete the proof by Theorem~\ref{thm:consis}.
\end{proof}

\begin{proof}[Proof of Proposition \ref{prop:normal}]

For the logistic link function, we have
\begin{align}\label{eq:h-g-logist}
h(y) = \log \frac{1-y}{y} \text{ and }
g(y) = y(1-y), \quad \text{for } y \in (0,0.5).
\end{align}
Since $\ttt_1^*$ is a sub-Gaussian random vector, then $\|\ttt_1^*\|_2^2$ is an sub-exponential random variable, which means there exist constant $c_1, c_2 > 0,$ such that for any $t > 0,$ we have $$\Pr(\|\ttt_1^*\|^2 \geq  t) \leq c_1\exp(-c_2t).$$
Then,
\begin{equation}\label{eq:to-verify}
\begin{aligned}
\Pr\left(\|\ttt_1^*\| \geq \frac{h(2\epsilon)}{C} \right) &= \Pr\left( \|\ttt_1^* \|^2 \geq \frac{(h(2\epsilon))^2}{C^2} \right) \\
&= \Pr\left( \|\ttt_1^* \|^2  \geq \frac{\log^2\left( \frac{1-2\epsilon}{2\epsilon} \right)}{C^2} \right)\\
&\leq c_1\exp\left( -c_2\frac{\log^2\left( \frac{1-2\epsilon}{2\epsilon} \right)}{C^2} \right)
\end{aligned}
\end{equation}


Recall we choose $\epsilon=\gamma_0 J^{-\gamma_1}$ in \eqref{eq:eps_choice}. Consequently,
\begin{equation}
\log^2\left(\frac{1-2\epsilon}{2\epsilon}\right)
= \gamma_1^2(\log J)^2 + O(\log J).
\end{equation}
Therefore,
\begin{equation}
	\begin{split}
		c_1\exp\left( -c_2\frac{\log^2\left( \frac{1-2\epsilon}{2\epsilon} \right)}{4C^2} \right)
		= &c_1\exp\left(
		- c_2\frac{\gamma_1^2(\log J)^2}{C^2} + O(\log J)
		\right)\\
		\leq & c_1\exp\left(
		- c_2\frac{\gamma_1^2(\log N)^2}{C^2\beta^2} + O(\log N)
		\right)\\
		= & o\left(\frac{1}{N}\right)
	\end{split}
\end{equation}
where the second inequality is due to the assumption that $J^{\beta}\geq N$. The above display together with \eqref{eq:to-verify} verifies \eqref{eq:epsilon1}. We proceed to verify \eqref{eq:epsilon2}.
According to \eqref{eq:h-g-logist}, we have
\begin{align*}
\frac{(h(2\epsilon))^{\frac{K+1}{K+3}}}{(\epsilon g(\epsilon))^2} &= \frac{\left( \log\left( \frac{1-2\epsilon}{2\epsilon} \right) \right)^{\frac{K+1}{K+3}}}{\epsilon^4(1-\epsilon)^2}.
\end{align*}
Plugging in $\epsilon=\gamma_0 J^{-\gamma_1}$, the above equation becomes
\begin{equation*}
	\frac{\left( \log\left( \frac{1-2\epsilon}{2\epsilon} \right) \right)^{\frac{K+1}{K+3}}}{\epsilon^4(1-\epsilon)^2}
	= \left(1+o(1)\right)\gamma_0^{-4}J^{4\gamma_1}\left(
	\gamma_1\log J + O(1)
	\right)^{\frac{K+1}{K+3}}
	= J^{4\gamma_1 + o(1)}.
\end{equation*}
Thus, for {$\gamma_1\in(0,\frac{1}{4(K+3)})$},
\begin{equation}
	\frac{\left( \log\left( \frac{1-2\epsilon}{2\epsilon} \right) \right)^{\frac{K+1}{K+3}}}{\epsilon^4(1-\epsilon)^2} = o(J^{\frac{1}{K+3}})
\end{equation}
This verifies \eqref{eq:epsilon2} and completes the proof by applying Theorem \ref{thm:consis}.

\end{proof}

\begin{proof}[Proof of Proposition \ref{prop:upper}]
The proof of Proposition \ref{prop:upper} is similar to proof of Lemma \ref{lem:P consis}. We will only state the main steps and omit the repeating details.
According to Lemma \ref{lem:chatterjee} in Appendix \ref{sec:lemma-proof}, we have
\begin{align}\label{eq:chatterjee 2}
\frac{1}{NJ}\EEE \left( \| \tilde X - X^* \|_F^2  \mid X^* \right) \leq c \min \left\{ \frac{\|X^*\|_*}{J\sqrt{N}}, \frac{\|X^*\|^2_*}{NJ}, 1 \right\}  + c e^{-cN},
\end{align}
Recall that we assume $\|\ttt^*_i\| \leq C_0$.
Following the similar arguments as in the proof of Lemma \ref{lem:P consis}, we have
\begin{align*}
\frac{1}{NJ}\EEE \left( \| \tilde X - X^* \|_F^2  \mid X^* \right) \leq c\frac{1}{\sqrt{J}}\left( \frac{C_0}{\delta} \right)^{\frac{K}{2}} + L\delta\left(\sqrt{C_0^2+1}+C\right) + c\exp(-cN).
\end{align*}
There is a difference from the proof of Lemma \ref{lem:P consis} that the rank of matrix $f(M_\delta)$ is upper bounded by
\begin{align*}
\mathrm{rank}(f(M_\delta)) \leq \min\{|\mathcal G_1|, |\mathcal G_2| \} \leq |\mathcal G_1| \leq c\left( \frac{C_0}{\delta} \right)^K.
\end{align*}
Choose $\delta = \left(\frac{cC_0^K}{JL^2\left(\sqrt{C_0^2+1}+C\right)^2}\right)^{\frac{1}{K+2}}$, then
\begin{align*}
\frac{1}{NJ}\EEE \left( \| \tilde X - X^* \|_F^2  \Big| X^* \right) \leq cJ^{-\frac{1}{K+2}} + c\exp\left(-cN\right).
\end{align*}
Let $g(N,J) := cJ^{-\frac{1}{K+2}} + c\exp\left(-cN\right)$. By taking expectation, we have
\begin{align*}
\frac{1}{NJ}\EEE \left( \| \tilde X - X^* \|_F^2 \right) \leq g(N,J).
\end{align*}
For any $\Delta_{N,J}>0$, by Chebyshev's inequality, we have
\begin{equation*}
	\Pr\left( \frac{1}{NJ} \| \tilde X - X^* \|_F^2 \geq  \frac{g(N,J)}{\Delta_{N,J}} \right)\leq \Delta_{N,J}.
\end{equation*}
Thus, for any sequence $\Delta_{N,J}$ satisfying $\Delta_{N,J}=o(1)$, we have
\begin{align*}
\lim_{N,J\to\infty}\Pr\left( \frac{1}{NJ} \| \tilde X - X^* \|_F^2 \leq \frac{g(N,J)}{\Delta_{N,J}} \right)=1.
\end{align*}
In what follows, we restrict our analysis to the event $\left\{\frac{1}{NJ} \| \tilde X - X^* \|_F^2 \leq \frac{g(N,J)}{\Delta_{N,J}}\right\}$.
By \eqref{eq:epschoice}, we have $x^*_{ij} = f((\ttt_i^*)^\top\aaaa_j^* + d_j^*) \in [2\epsilon, 1-2\epsilon],$ which leads to
\begin{align*}
\frac{1}{NJ} \sum_{i,j} 1_{\{ \tilde x_{ij} \notin [\epsilon,1-\epsilon] \}} \leq \frac{g(N,J)}{\Delta_{N,J}\epsilon^2}=\frac{1}{\Delta_{N,J}\epsilon^2}\left(cJ^{-\frac{1}{K+2}} + c\exp\left(-cN\right)\right).
\end{align*}
Following the similar procedure as in proof of Lemma \ref{lem:P consis}, we can further bound $\| \hat X - X^* \|_F^2$ by
\begin{align}
\frac{1}{NJ} \| \hat X - X^* \|_F^2 &\leq \frac{1}{\epsilon^2\Delta_{N,J}}\left( cJ^{-\frac{1}{K+2}} + c\exp(-cN) \right)\nonumber\\
&= \frac{1}{\Delta_{N,J}}\left( cJ^{-\frac{1}{K+2}} + c\exp(-cN) \right). \label{eq:rate1}
\end{align}
To summarize, we have
\begin{align*}
\Pr\left( \frac{1}{NJ} \| \hat X - X^* \|_F^2 \leq \frac{1}{\Delta_{N,J}}\left( cJ^{-\frac{1}{K+2}} + c\exp(-cN) \right) \right) \to 1, \quad \text{as } N,J \to \infty,
\end{align*}
for any $\Delta_{N,J} = o(1)$. This implies $\frac{1}{NJ} \| \hat X - X^* \|_F^2 = O_p\left(J^{-\frac{1}{K+2}} +\exp(-cN) \right)=O_p(J^{-\frac{1}{K+2}} )$, where the second equation is due to $N\geq J$.
\end{proof}

\begin{proof}[Proof of Proposition \ref{prop:missing}]

We have
\begin{equation*}
\begin{aligned}
L_{N,J}(A^*, \hat A) &= \frac{1}{JK} \min_{O\in \mathbb R^{K\times K}} \left\{ \Vert A^* - \hat A O \Vert_F^2 \right\} \\
&= \frac{1}{JK} \min_{O\in \mathbb R^{K\times K}} \left\{ \Vert (A^*\Sigma^{\frac{1}{2}})\Sigma^{-\frac{1}{2}} - \hat A O \Vert_F^2 \right\} \\
&\leq \frac{1}{JK} \min_{O\in \mathbb R^{K\times K}} \left\{ (\Vert A^*\Sigma^{\frac{1}{2}} - \hat A O\Sigma^{\frac{1}{2}} \Vert_F^2 \right\} \| \Sigma^{-\frac{1}{2}} \|_F^2\\
&= \frac{1}{JK} \min_{Q\in \mathbb R^{K\times K}} \left\{ (\Vert A^*\Sigma^{\frac{1}{2}} - \hat A Q \Vert_F^2 \right\} \| \Sigma^{-\frac{1}{2}} \|_F^2\\
&= \frac{1}{JK} \min_{Q\in \mathbb R^{K\times K}} \left\{ (\Vert \tilde A - \hat A Q \Vert_F^2 \right\} \| \Sigma^{-\frac{1}{2}} \|_F^2\\
&=L_{N,J}(\tilde A, \hat A) \|\Sigma^{-\frac{1}{2}}\|_F^2,
\end{aligned}
\end{equation*}
where $\tilde A = A^* \Sigma^{\frac{1}{2}}.$
Let $\tilde \Theta = (\tilde \ttt_1,...,\tilde \ttt_N)^\top = \Theta^* \Sigma^{-\frac{1}{2}}$.
Then  $\Theta^*(A^*)^\top = \tilde \Theta \tilde A^\top$, and
$\tilde \ttt_i$s are independent and identically distributed from a distribution $\tilde F$ which has mean $\0$ and covariance matrix $I_K$.
Therefore, 
it suffices to show $L_{N,J}(A^*, \hat A)  \overset{pr}{\to} 0$ when $\Sigma = I_K$.
We prove it through
the following two lemmas whose proofs are given in Appendix~\ref{sec:lemma-proof}.

\begin{lemma}\label{lem:P consis}
Assume conditions A1, A2, A3, A5 and A6 are satisfied and further assume that \eqref{eq:epsilon1} and \eqref{eq:epsilon2} are satisfied.
Then,
\begin{align*}
\frac{1}{NJ} \left\| \hat\Theta\hat A^\top - \Theta^*(A^*)^\top \right\|_F^2 \overset{pr}{\rightarrow} 0.,
\end{align*}
where $\hat\Theta$ and $\hat A$ are given in Algorithm \ref{alg:SVD2}.
\end{lemma}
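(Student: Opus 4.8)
The plan is to control the reconstruction error on three successive scales. Write $M^* := \Theta^*(A^*)^\top + \E_N (\dd^*)^\top$ for the true linear predictor matrix, which has rank at most $K+1$, and recall $X^* = f(M^*)$. The quantity $\hat\Theta\hat A^\top$ appearing in the conclusion is exactly the best rank-$K$ approximation of the centered matrix $\hat M = (\tilde m_{ij} - \hat d_j)_{N\times J}$, so the goal is to compare this best rank-$K$ approximation against $\Theta^*(A^*)^\top$. I would proceed in three stages: (i) a probability-scale bound $\frac{1}{NJ}\|\hat X - X^*\|_F^2 \overset{pr}{\to} 0$ obtained from the matrix completion machinery; (ii) transfer to the predictor scale, $\frac{1}{NJ}\|\tilde M - M^*\|_F^2 \overset{pr}{\to} 0$, through the local behaviour of $f^{-1}$; and (iii) pass from $\tilde M$ to its best rank-$K$ approximation and strip off the intercept by centering.

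For stage (i) I would invoke Lemma~\ref{lem:chatterjee} (Chatterjee's matrix completion result), which, after the $1/\hat p$ debiasing justified by condition A6 and $\hat p\overset{pr}{\to}p$, bounds $\frac{1}{NJ}\EEE(\|\tilde X - X^*\|_F^2 \mid X^*)$ by $c\min\{\|X^*\|_*/(J\sqrt N), \|X^*\|_*^2/(NJ), 1\} + ce^{-cN}$, the exponential term being negligible under A5. Since $f$ is nonlinear, $X^* = f(M^*)$ is generically full rank, so the crux is to bound $\|X^*\|_*$ by approximating $X^*$ with a genuinely low-rank matrix $f(M_\delta)$: partition the person parameters lying in a ball $B^{(K)}_{\0}(R)$ into a $\delta$-grid, so $f(M_\delta)$ has rank at most $c(R/\delta)^K$, with the discretization error controlled through the Lipschitz constant $L$ of A3, exactly as in the proof of Proposition~\ref{prop:upper}. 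Here the radius is taken to be $R = h(2\epsilon)/C$, and the tail condition~\eqref{eq:epsilon1} together with a union bound over the $N$ rows ensures that, with probability tending to one, every $\|\ttt_i^*\|<R$; consequently every $|m_{ij}^*|\le\sqrt{C^2+h(2\epsilon)^2}$ by A1, which the factor-two slack in $h(2\epsilon)$ keeps below $h(\epsilon)$, so every $x_{ij}^*\in[\epsilon,1-\epsilon]$. On that event the extra $\epsilon$-clipping only contracts toward $x_{ij}^*$, giving $\|\hat X - X^*\|_F^2\le\|\tilde X - X^*\|_F^2$.

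Stage (ii) is, I expect, the main obstacle, because it is where the matrix-estimation rate from stage (i) and the amplification incurred by inverting a flat link must be made simultaneously negligible. On the high-probability event just described, both $\hat x_{ij}$ and $x_{ij}^*$ lie in $[\epsilon,1-\epsilon]$, so the mean value theorem applied to $f^{-1}$ (whose derivative there is at most $1/g(\epsilon)$) yields $|\tilde m_{ij}-m_{ij}^*|\le |\hat x_{ij}-x_{ij}^*|/g(\epsilon)$ uniformly in $(i,j)$, whence $\frac{1}{NJ}\|\tilde M-M^*\|_F^2\le\frac{1}{NJ\,g(\epsilon)^2}\|\hat X-X^*\|_F^2$. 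The difficulty is that the Chatterjee rate degrades as the effective rank $(h(2\epsilon)/\delta)^K$ grows while the prefactor $1/g(\epsilon)^2$ blows up as $\epsilon\to 0$; the delicate bookkeeping is to optimize $\delta$ and verify that the product is $o(1)$, which is precisely the content of condition~\eqref{eq:epsilon2} (the exponent $(K+1)/(K+3)$ and the $J^{1/(K+3)}$ slack reflecting the balance between discretization rank and estimation error). This stage uses conditions A1, A3, and both~\eqref{eq:epsilon1} and~\eqref{eq:epsilon2} together.

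For stage (iii), let $P = I_N - \frac{1}{N}\E_N\E_N^\top$ be the column-centering projection, so that $\hat M = P\tilde M$. Because $P\E_N = \0_N$, the centered signal $PM^* = P\Theta^*(A^*)^\top$ has rank at most $K$; using that $\hat\Theta\hat A^\top$ is the best rank-$K$ approximation of $\hat M$ and that $P$ is a contraction,
\begin{align*}
\|\hat\Theta\hat A^\top - PM^*\|_F \le \|\hat\Theta\hat A^\top - \hat M\|_F + \|\hat M - PM^*\|_F \le 2\|\hat M - PM^*\|_F \le 2\|\tilde M - M^*\|_F.
\end{align*}
Finally, $PM^* - \Theta^*(A^*)^\top = -\E_N\bar\ttt^{*\top}(A^*)^\top$ with $\bar\ttt^* = \frac{1}{N}\sum_i \ttt_i^*$; by A1 one has $\sigma_1(A^*) = O(\sqrt J)$ and by A2 and the law of large numbers $\|\bar\ttt^*\| = O_p(N^{-1/2})$, so $\frac{1}{NJ}\|PM^* - \Theta^*(A^*)^\top\|_F^2 = O_p(1/N)\to 0$. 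Combining this with the display and stage (ii) and dividing by $NJ$ gives $\frac{1}{NJ}\|\hat\Theta\hat A^\top - \Theta^*(A^*)^\top\|_F^2 \overset{pr}{\to} 0$. Note that condition A4 plays no role here, as the claim concerns only the product $\hat\Theta\hat A^\top$; A4 and the Davis--Kahan--Wedin theorem would be needed only later to separate $\hat A$ from $\hat\Theta$ and conclude about $L_{N,J}(A^*,\hat A)$.
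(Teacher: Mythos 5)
Your proposal is correct in substance and, for stages (i)--(ii), follows the same route as the paper: Chatterjee's matrix-completion bound, a $\delta$-net argument to control $\|X^*\|_*$ via a low-rank surrogate $f(M_\delta)$, the high-probability event supplied by \eqref{eq:epsilon1}, and the $1/g(\epsilon)$ transfer through $f^{-1}$. Two of your steps are genuinely different from, and cleaner than, the paper's. First, for the truncation you note that $\hat x_{ij}=\Pi_{[\epsilon,1-\epsilon]}(\tilde x_{ij})$ and that projection onto an interval containing $x_{ij}^*$ contracts toward it, so $\|\hat X-X^*\|_F\le\|\tilde X-X^*\|_F$; the paper instead counts the entries with $\tilde x_{ij}\notin[\epsilon,1-\epsilon]$ and pays an extra factor $1/\epsilon^2$, which is precisely where the $\epsilon^2$ in the denominator of \eqref{eq:epsilon2} gets consumed, so your bound is sharper and certainly sufficient. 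Second, your stage (iii) replaces the paper's expansion $\|H_1+H_2\|_F^2=\|H_1\|_F^2+\|H_2\|_F^2+2\,\mathrm{tr}(H_1^\top H_2)$ and the ensuing trace estimate by the identity $\hat M=P\tilde M$, the Frobenius (Eckart--Young) optimality of $\hat\Theta\hat A^\top$ against the rank-$K$ competitor $PM^*=P\Theta^*(A^*)^\top$, and the contraction $\|P(\tilde M-M^*)\|_F\le\|\tilde M-M^*\|_F$; this also avoids the paper's detour through Weyl's inequality in spectral norm and the rank-based conversion back to Frobenius norm. Both treatments of the intercept ultimately rest on $\|\bar\ttt^*\|=O_p(N^{-1/2})$ from A2, which you supply, and you are right that A4 is not needed here. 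One caveat: to land exactly on the exponents in \eqref{eq:epsilon2}, the $\delta$-net must be taken over the augmented item parameters $\aaaa_j^+=(d_j^*,(\aaaa_j^*)^\top)^\top\in B^{(K+1)}_{\0}(C)$, whose radius is a fixed constant and which yields $\mathrm{rank}(f(M_\delta))\le c(C/\delta)^{K+1}$, as the paper does. Your net over the person ball of radius $R=h(2\epsilon)/C$, which grows as $\epsilon\to 0$, gives $c(R/\delta)^{K}$ points and, after optimizing $\delta$, a rate carrying a larger power of $h(2\epsilon)$ that \eqref{eq:epsilon2} does not literally control (it is fine in the concrete settings of Propositions~\ref{prop:bounded} and \ref{prop:normal}, but not under the bare hypothesis). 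This is a one-line fix, not a structural flaw.
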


\begin{lemma}\label{lem:A consis}
Suppose conditions A1, A2 and A4 are satisfied and further suppose that
\begin{align*}
\frac{1}{NJ} \left\| \hat\Theta\hat A^\top - \Theta^*(A^*)^\top \right\|_F^2 \overset{pr}{\rightarrow} 0.
\end{align*}
Then, $L_{N,J}(A^*,\hat A)  \overset{pr}{\rightarrow} 0.$
\end{lemma}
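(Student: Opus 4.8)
The plan is to turn the loading loss into a subspace-alignment problem and then invoke matrix perturbation theory. Write $P^* = \Theta^*(A^*)^\top$ and $\hat P = \hat\Theta\hat A^\top$, so that the hypothesis reads $\|\hat P - P^*\|_F^2/(NJ)\overset{pr}{\to}0$. Because the rotation $O$ in the loss ranges over \emph{all} of $\mathbb R^{K\times K}$, the inner minimization is an ordinary least-squares problem: for fixed $\hat A$ the identity $\min_{O}\|A^* - \hat A O\|_F^2 = \|(I - P_{\hat A})A^*\|_F^2$ holds, where $P_{\hat A}$ is the orthogonal projection onto $\mathrm{col}(\hat A)$. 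Since $\hat A = \tfrac{1}{\sqrt N}\hat V\,\mathrm{diag}(\hat\sigma_1,\dots,\hat\sigma_K)$ with $\hat V = (\hat\vv_1,\dots,\hat\vv_K)$ having orthonormal columns, on the event $\hat\sigma_K > 0$ we have $P_{\hat A} = \hat V\hat V^\top$ and hence $L_{N,J}(A^*,\hat A) = \tfrac{1}{JK}\|(I - \hat V\hat V^\top)A^*\|_F^2$. Thus it suffices to control how well the top-$K$ right singular subspace of $\hat P$ captures the column space of $A^*$.

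Next I would relate $\mathrm{col}(A^*)$ to the right singular subspace of $P^*$. With probability tending to one $\Theta^*$ has rank $K$ (by A2, since $\hat\Sigma\to\Sigma\succ 0$), so the row space of $P^* = \Theta^*(A^*)^\top$ equals $\mathrm{col}(A^*)$; writing $V^*$ for the $J\times K$ matrix of top-$K$ right singular vectors of $P^*$ we have $\mathrm{col}(V^*) = \mathrm{col}(A^*)$ and therefore $A^* = V^*V^{*\top}A^*$. Substituting and using submultiplicativity, $\|(I-\hat V\hat V^\top)A^*\|_F \le \|(I - \hat V\hat V^\top)V^*\|_F\,\sigma_1(A^*)$, and A1 gives $\sigma_1(A^*)\le \|A^*\|_F \le C\sqrt J$. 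The factor $\|(I-\hat V\hat V^\top)V^*\|_F$ is exactly a $\sin\Theta$ distance between the two right singular subspaces, which I would bound by the Davis--Kahan--Wedin theorem. Because $P^*$ has rank $K$, its $(K+1)$th singular value vanishes, so the relevant gap is simply $\sigma_K(P^*)$ and the theorem yields $\|(I-\hat V\hat V^\top)V^*\|_F \le c\,\|\hat P - P^*\|_F/\sigma_K(P^*)$. Combining, $L_{N,J}(A^*,\hat A) \le \tfrac{c^2 C^2}{K}\cdot \|\hat P - P^*\|_F^2/\sigma_K(P^*)^2$.

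It remains to show the gap grows like $\sqrt{NJ}$. This is the same computation as in the proof of Theorem~\ref{thm:singular}: $\sigma_K(P^*)\ge \sigma_K(\Theta^*)\sigma_K(A^*)$, where $\sigma_K(A^*)\ge C_1\sqrt J$ by A4, and $\sigma_K(\Theta^*) = \sqrt{N\lambda_K(\hat\Sigma)}$ with $\lambda_K(\hat\Sigma)\overset{pr}{\to}\lambda_K(\Sigma) > 0$ by the law of large numbers and A2. Hence $\sigma_K(P^*)^2/(NJ)$ is bounded below by a positive constant with probability tending to one. Dividing numerator and denominator by $NJ$, the bound becomes $L_{N,J}(A^*,\hat A) \le \tfrac{c^2C^2}{K}\cdot\frac{\|\hat P - P^*\|_F^2/(NJ)}{\sigma_K(P^*)^2/(NJ)}$, a ratio whose numerator tends to $0$ in probability by hypothesis and whose denominator stays bounded away from $0$; a standard restriction-to-a-high-probability-event argument (as in the proof of Proposition~\ref{prop:upper}) then gives $L_{N,J}(A^*,\hat A)\overset{pr}{\to}0$.

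The step I expect to require the most care is the Davis--Kahan--Wedin application. Two points are delicate: first, matching the \emph{oblique} (arbitrary invertible) rotation in the loss with the \emph{orthogonal} subspace-perturbation bound --- this is precisely why reducing the loss to the projection residual $\|(I-\hat V\hat V^\top)A^*\|_F^2$ is the right move, as it removes any dependence on the relative scaling of $\hat A$ and $A^*$; second, verifying the singular-value separation required by the theorem, i.e. that with high probability $\sigma_K(P^*)$ is well separated from $\sigma_{K+1}(P^*) = 0$ and $\sigma_{K+1}(\hat P) = 0$, so that the bound applies with gap $\sigma_K(P^*)\asymp\sqrt{NJ}$. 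One should also confine the argument to the high-probability event $\{\hat\sigma_K > 0,\ \mathrm{rank}(\Theta^*) = K\}$, whose probability tends to one by Theorem~\ref{thm:singular} and A2, so that $\hat V$ and the projection $P_{\hat A} = \hat V\hat V^\top$ are well defined throughout.
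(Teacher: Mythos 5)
Your argument is correct, but it takes a genuinely different route from the paper's. The paper does not invoke Wedin's theorem to prove this lemma: it exhibits an explicit candidate rotation $Q^{(N,J)} = \frac{1}{\sqrt{N}}\hat\Theta^\top\Theta^*\left((\Theta^*)^\top\Theta^*\right)^{-1/2}$, upper-bounds the minimum by $\|A^* - \hat A Q^{(N,J)}\|_F$, and splits this via the triangle inequality into (i) $\|A^*\left(I_K - N^{-1/2}((\Theta^*)^\top\Theta^*)^{1/2}\right)\|_F$, controlled by the concentration of $N^{-1}(\Theta^*)^\top\Theta^*$ (after the reduction to $\Sigma = I_K$ performed in the proof of Proposition~\ref{prop:missing}), and (ii) $\|(A^*(\Theta^*)^\top - \hat A\hat\Theta^\top)\,N^{-1/2}\Theta^*((\Theta^*)^\top\Theta^*)^{-1/2}\|_F$, controlled directly by the hypothesis. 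You instead solve the inner minimization exactly --- correctly observing that since $O$ ranges over all of $\mathbb{R}^{K\times K}$ the loss equals the projection residual $\|(I-\hat V\hat V^\top)A^*\|_F^2/(JK)$ --- reduce to a $\sin\Theta$ distance between right singular subspaces, and apply Wedin with gap $\sigma_K(\Theta^*(A^*)^\top)\asymp\sqrt{NJ}$; this is exactly the route sketched in the paper's informal remark about the Davis--Kahan--Wedin theorem, even though the appendix proof avoids it. Each approach buys something: the paper's is elementary and self-contained (no perturbation theorem, and it produces an explicit near-optimal rotation), while yours needs no reduction to isotropic factors and isolates the two genuinely delicate points --- that $P_{\hat A}=\hat V\hat V^\top$ only on the event $\{\hat\sigma_K>0\}$, and that the gap condition for Wedin must be checked via Weyl's inequality, $\sigma_K(\hat\Theta\hat A^\top)\geq\sigma_K(\Theta^*(A^*)^\top)-\|\hat\Theta\hat A^\top-\Theta^*(A^*)^\top\|_F$, which is of order $\sqrt{NJ}$ with high probability --- both of which you flag and resolve correctly. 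One small suggestion: rather than citing Theorem~\ref{thm:singular} for $\Pr(\hat\sigma_K>0)\to 1$ (that theorem sits downstream of this machinery in the paper), obtain it directly from the same Weyl bound together with A2 and A4, which keeps the lemma self-contained.
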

We complete the proof.
\end{proof}

\section{Proof of Lemmas}\label{sec:lemma-proof}
\begin{proof}[Proof of Lemma \ref{lem:P consis}]
We first give a lemma regarding the error bound for recovering the probability matrix $X^*.$

\begin{lemma}\label{lem:chatterjee}
Given $X^*$, we have
\begin{align}\label{eq:chatterjee}
\frac{1}{NJ}\EEE \left( \| \tilde X - X^* \|_F^2  \Big| X^* \right) \leq c \min \left\{ \frac{\|X^*\|_*}{J\sqrt{N}}, \frac{\|X^*\|^2_*}{NJ}, 1 \right\}  + c e^{-cN}.
\end{align}
\end{lemma}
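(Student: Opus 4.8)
\textbf{Proof proposal for Lemma~\ref{lem:chatterjee}.} The plan is to identify $\tilde X$ as a rescaled and clipped version of the universal singular value thresholding (USVT) estimator of \cite{chatterjee2015matrix} applied to the zero-filled data matrix $Z$, and then to read off \eqref{eq:chatterjee} from his matrix-completion bound once its hypotheses are verified. First I would pin down the conditional law of $Z$ given $X^*$: conditioning on $X^*$ amounts to conditioning on $\Theta^*$, $A^*$ and $\dd^*$, and then the entries $z_{ij} = w_{ij}y_{ij}$ are mutually independent (by local independence of the $Y_{ij}$, the i.i.d.\ structure of the $w_{ij}$ from A6, and the missing-completely-at-random assumption that $W$ is independent of $Y$), bounded with $z_{ij}\in\{0,1\}$, and have conditional mean $\EEE[z_{ij}\mid X^*] = p\,x^*_{ij}$. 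Hence $Z$ is a partially observed noisy copy of the mean matrix $pX^*$ with independent bounded entries, which is exactly the regime analyzed in \cite{chatterjee2015matrix}.

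Second, I would apply that theorem to the thresholded SVD in step 5. Each entry has conditional variance $p\,x^*_{ij}(1-p\,x^*_{ij}) \le \tfrac14\big(p + 3p(1-p)\big)$, so the data-driven cutoff $1.01\sqrt{N(\hat p + 3\hat p(1-\hat p))}$ is precisely the empirical analogue of Chatterjee's $(2+\eta)\sqrt{N\nu}$ threshold, with variance proxy $\nu = \tfrac14(p + 3p(1-p))$ and $\eta=0.02$; in the no-missing case $p=\hat p=1$ this collapses to $1.01\sqrt N$ and recovers $\nu=1/4=\max_{ij}\mathrm{Var}(Y_{ij})$. Keeping the singular vectors above this level and dividing by $\hat p$ gives an estimate of $X^*$, and clipping it to $[0,1]$ to form $\tilde X$ can only shrink the Frobenius distance to $X^* \in [0,1]^{N\times J}$. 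Chatterjee's bound then delivers exactly the three-term minimum $c\min\{\|X^*\|_*/(J\sqrt N),\,\|X^*\|_*^2/(NJ),\,1\}$ — the first two being the nuclear-norm regimes and the trivial $1$ coming from $x^*_{ij},\tilde x_{ij}\in[0,1]$ — plus the additive $c\,e^{-cN}$ remainder, which is the large-deviation probability that $\|Z-\EEE[Z\mid X^*]\|$ overshoots the threshold.

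The only place beyond citing \cite{chatterjee2015matrix} where genuine work is required is the missing-data adaptation: both the threshold and the final rescaling are governed by the \emph{random} $\hat p$ rather than the unknown $p$. I would handle this with Hoeffding's inequality, $\Pr(|\hat p - p|\ge t)\le 2\exp(-2NJt^2)$, so that on an event of probability $1-e^{-cNJ}$ one may replace $\hat p$ by $p$ at the cost of a negligible distortion of both the threshold and the factor $1/\hat p$; the excluded event is swallowed by $c\,e^{-cN}$. A secondary bookkeeping point is the modification $\tilde K\ge K+1$ in step 5: relative to the pure USVT rule it can retain at most $K+1$ extra rank-one components, each of operator norm below the $O(\sqrt N)$ threshold, adding at most $O(KN)$ to $\|\tilde X - X^*\|_F^2$ and hence $O(K/J)=o(1)$ after normalization, so it does not affect the asymptotic conclusions drawn from the lemma. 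The heavy analytic content — the spectral concentration inequality for $\|Z-\EEE[Z\mid X^*]\|$ and the resulting USVT error bound — is inherited verbatim from \cite{chatterjee2015matrix}, so the task reduces to matching hypotheses and disposing of the two $\hat p$-related subtleties.
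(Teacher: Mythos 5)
Your proposal follows essentially the same route as the paper: both reduce the lemma to Theorem~1.1 of \cite{chatterjee2015matrix}, match the cutoff $1.01\sqrt{N(\hat p+3\hat p(1-\hat p))}$ to Chatterjee's $(2+\eta)\sqrt{N\nu}$ threshold with $\eta=0.02$, use that clipping to $[0,1]$ only shrinks the Frobenius distance to $X^*$, and isolate the same two deviations from the cited theorem (the role of $\hat p$ versus the constant $p$, and the forced retention of $\tilde K\geq K+1$ singular values). The one point where your argument does not quite deliver the stated inequality is the $\tilde K\geq K+1$ modification: you bound its contribution by $O(K/J)$ after normalization and declare it harmless for ``the asymptotic conclusions drawn from the lemma,'' but the lemma as written contains no additive $1/J$ term, so a residual $c/J$ is left unaccounted for. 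The paper handles this by proving a variant of Chatterjee's Lemma~3.5 (the appendix lemma on $\tilde B$) that carries an extra term $(1+\delta)\sqrt{T}\|A-B\|$, which propagates to a $1/J$ inside the first entry of the minimum, and then removes it by a short case analysis: if $\|X^*\|_*/(J\sqrt N)+1/J\leq\|X^*\|_*^2/(NJ)$, then $\|X^*\|_*/(J\sqrt N)\leq\|X^*\|_*^2/(NJ)$, which forces $\|X^*\|_*\geq\sqrt N$ and hence $1/J\leq\|X^*\|_*/(J\sqrt N)$, so the $1/J$ is always dominated by whichever term attains the minimum and can be absorbed into $c$. Adding that absorption step would make your proof complete as stated; your Hoeffding treatment of $\hat p$ is correct but is already subsumed by Chatterjee's theorem, which the paper simply invokes by noting that the constant $p$ of condition A6 can be folded into the constants.
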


Let
\begin{align*}
p_\epsilon := \Pr( \|\ttt_1^*\| > C_{\epsilon}),
\end{align*}
where {$C_{\epsilon}=h(2\epsilon)/C$} is a quantity depending on $\epsilon$.
Let
\begin{align*}
\mathcal A_{N,J} := \left\{  \| \ttt_i^* \| \leq C_{\epsilon}, \text{ for } i=1,...,N \right\}.
\end{align*}
{Then, according to the condition \eqref{eq:epsilon1}
\begin{align*}
\lim_{N,J\to\infty}\Pr(\mathcal A_{N,J}) = \lim_{N,J\to\infty}(1-p_\epsilon)^N = 1.
\end{align*}}
In what follows, we restrict the analysis to  the event $\mathcal A_{N,J}$.
Let $\mathcal G_1, \mathcal G_2$ be two $\delta$-nets for $B^{(K)}_0(C_{\epsilon})$ and $B^{(K+1)}_0(C),$ respectively. This means $\mathcal G_1 \subset B_0^{(K)}(C_{\epsilon}), \mathcal G_2 \subset B_0^{(K+1)}(C)$ and $$B^{(K)}_0(C_{\epsilon}) \subset \bigcup_{\xx \in \mathcal G_1}B_{\xx}^{(K)}(\delta), \quad  B^{(K+1)}_0(C) \subset \bigcup_{\xx \in \mathcal G_2}B_{\xx}^{(K+1)}(\delta).$$
For any $\ttt_i^*,$ let $p(\ttt_i^*)$ be a point in $\mathcal G_1$ such that
\begin{align*}
\| \ttt_i^* - p(\ttt_i^*) \| \leq \delta,
\end{align*}
which implies
\begin{align*}
\| \ttt_i^+ - (1,p(\ttt_i^*)^\top)^\top \| = \| \ttt_i^* - p(\ttt_i^*) \| \leq \delta.
\end{align*}
With a little abuse of notation, we use $p(\ttt_i^+)$ to denote $(1,p(\ttt_i^*)^\top)^\top.$
For any $\aaaa_j^+,$ let $p(\aaaa_j^+)$ be a point in $\mathcal G_2$ such that
\begin{align*}
\| \aaaa_j^+ - p(\aaaa_j^+) \| \leq \delta.
\end{align*}
It is not hard to see that we can find such $\mathcal G_1, \mathcal G_2$ such that
\begin{align*}
| \mathcal G_1 | \leq c \left( \frac{C_{\epsilon}}{\delta} \right)^{K}, \quad | \mathcal G_2 | \leq c \left( \frac{C}{\delta}  \right)^{K+1},
\end{align*}
This is due to definition of $\mathcal A_{N,J}$ and condition A1.
Let $M_{\delta} = (_{\delta}m_{ij} )_{N\times J}$, where $_{\delta} m_{ij} = f\left( p(\ttt_i^+)^\top p(\aaaa_j^+) \right),$ then we have
\begin{align*}
\mathrm{rank}(M_\delta) \leq \min\{|\mathcal G_1|, |\mathcal G_2| \} \leq |\mathcal G_2| \leq c \left( \frac{C}{\delta} \right)^{K+1}.
\end{align*}
Now we provide an upper bound for $\|X^*\|_*$ on the right-hand side of \eqref{eq:chatterjee}. We have
\begin{align}\label{eq:two}
\|X^*\|_* = \| f(M^*) \|_* \leq \overbrace{\| f(M^*) - f(M_{\delta}) \|_*}^{(I)} + \overbrace{\| f(M_{\delta}) \|_*}^{(II)}.
\end{align}
The second term on the right-hand side of the above display is bounded above by
\begin{align}\label{eq:II}
(II) \leq \sqrt{\mathrm{rank}(f(M_{\delta}))} \cdot \|f(M_\delta)\|_F \leq c \left( \frac{C}{\delta} \right)^{\frac{K+1}{2}} \sqrt{NJ}.
\end{align}
Now we consider the first term. We have
\begin{align*}
\left| (\ttt_i^+)^\top \aaaa_j^+ - (p(\ttt_i^+))^\top p(\aaaa_j^+) \right| &\leq \left| ( \ttt_i^+ )^\top (\aaaa_j^+ - p(\aaaa_j^+)) \right| + \left|  ( \ttt_i^+ - p(\ttt_i^+) )^\top p(\aaaa_j^+)  \right| \\
&\leq \sqrt{C_{\epsilon}^2+1}\cdot\delta + \delta C.
\end{align*}
So
\begin{align*}
| f(m^*_{ij}) - f(_\delta m_{ij}) | &= | f \left(        (  \ttt_i^+  )^\top  \aaaa_j^+ \right ) - f((p(\ttt_i^+))^\top p(\aaaa_j^+))  | \\
&\leq L\delta \left(\sqrt{C_{\epsilon}^2+1} + C \right).
\end{align*}
We have used the Lipschitz continuity in condition A3 here.
Then the first term in \eqref{eq:two} is bounded from above as
\begin{align}\label{eq:I}
(I) \leq \sqrt{J} \| f(M^*) - f(M_\delta) \|_F \leq L\delta  \left(\sqrt{C_{\epsilon}^2+1} + C\right) \sqrt{J}\sqrt{NJ}.
\end{align}
Here we used the fact that the rank of the matrix $f(M^*) - f(M_\delta)$ cannot exceed $J$ according to condition A5.
Combined \eqref{eq:chatterjee}, \eqref{eq:two}, \eqref{eq:II} and \eqref{eq:I}, then on the event $\mathcal A_{N,J},$
\begin{align*}
\frac{1}{NJ}\EEE \left( \| \tilde X - X^* \|_F^2  \Big| X^* \right) \leq c\frac{1}{\sqrt{J}}\left( \frac{C}{\delta} \right)^{\frac{K+1}{2}} + L\delta  \left(\sqrt{C_{\epsilon}^2+1} + C\right) + c\exp(-cN).
\end{align*}
Choose $\delta = \left( \frac{cC^{K+1}}{JL^2 (\sqrt{C_\epsilon^2+1}+C)^2}\right)^{\frac{1}{K+3}}$, then
\begin{align*}
\frac{1}{NJ}\EEE \left( \| \tilde X - X^* \|_F^2  \Big| X^* \right) \leq c C_{\epsilon}^{\frac{K+1}{K+3}}J^{\frac{-1}{K+3}} + c\exp(-cN),
\end{align*}
which implies
\begin{align*}
\frac{1}{NJ}\EEE \left( \| \tilde X - X^* \|_F^2  \mid \mathcal A_{N,J} \right) \leq g(N,J),
\end{align*}
where we define
$g(N,J) := c C_{\epsilon}^{\frac{K+1}{K+3}}J^{\frac{-1}{K+3}} + c\exp(-cN)$.
By Chebyshev's inequality, for any $\Delta_{N,J}>0$,
\begin{align*}
\Pr\left( \frac{1}{NJ} \| \tilde X - X^* \|_F^2 \geq \frac{g(N,J)}{\Delta_{N,J}}  \Big| \mathcal A_{N,J} \right) \leq \Delta_{N,J}.
\end{align*}
Thus,
\begin{align}\label{eq:bnj}
\Pr\left( \frac{1}{NJ} \| \tilde X - X^* \|_F^2 \leq \frac{g(N,J)}{\Delta_{N,J}}  \Big| \mathcal A_{N,J} \right) \geq 1-\Delta_{N,J}.
\end{align}
Let $\mathcal B_{N,J} := \mathcal A_{N,J} \cap \{\frac{1}{NJ} \| \tilde X - X^* \|_F^2 \leq \frac{g(N,J)}{\Delta_{N,J}}\},$ then according to \eqref{eq:bnj} for any sequence $\Delta_{N,J}$ satisfying $\Delta_{N,J}=o(1)$, we have
 $$\lim_{N,J\to\infty}\Pr\left(\mathcal B_{N,J}\right) = \lim_{N,J\to\infty}\Pr\left(\mathcal A_{N,J}\right)\cdot\lim_{N,J\to\infty}\Pr\left(\frac{1}{NJ} \| \tilde X - X^* \|_F^2 \leq \frac{g(N,J)}{\Delta_{N,J}}\Big|\mathcal A_{N,J}\right)=1.$$
We will restrict our analysis on $\mathcal B_{N,J} $ in what follows. Let $h(N,J) = \frac{g(N,J)}{\Delta_{N,J}}$, then on $\mathcal B_{N,J},$ we have $\frac{1}{NJ} \| \tilde X - X^* \|_F^2 \leq h(N,J).$

Recall $C_{\epsilon} = \frac{h(2\epsilon)}{C}$. Then, according to the definition of the function $h$ and $C_{\epsilon}$, we can see that  $f(CC_{\epsilon}),f(-CC_{\epsilon}) \in [2\epsilon, 1-2\epsilon].$ This interval is non-empty because $\epsilon \leq \frac{1}{4}.$ Thus, when the event $\mathcal B_{N,J}$ happens, we have $x^*_{ij} = f((\ttt_i^+)^\top\aaaa_j^+) \in [2\epsilon, 1-2\epsilon],$ which leads to
\begin{align*}
\frac{1}{NJ} \sum_{i,j} 1_{\{ \tilde x_{ij} \notin [\epsilon,1-\epsilon] \}}\leq \frac{1}{NJ} \sum_{i,j} 1_{\{| \tilde x_{ij}- x_{ij}^*| \geq \epsilon \}}\leq \frac{1}{NJ} \sum_{i,j} \frac{( \tilde x_{ij}- x_{ij}^*)^2}{\epsilon^2}
 \leq \frac{h(N,J)}{\epsilon^2}.
\end{align*}
Since $\hat X$ and $\tilde X$ are not far away from each other by definition, we can bound $\| \hat X - X^* \|_F^2$ by
\begin{align}
\frac{1}{NJ} \| \hat X - X^* \|_F^2 &= \frac{1}{NJ} \sum_{i,j} \left[ (\tilde x_{ij} - x_{ij}^*)^2 1_{\{ \tilde x_{ij} \in [\epsilon, 1-\epsilon] \}} + (\hat x_{ij} - x_{ij}^*)^2 1_{\{ \tilde x_{ij} \notin [\epsilon, 1-\epsilon] \}} \right] \nonumber\\
&\leq \frac{1}{NJ} \sum_{i,j} (\tilde x_{ij} - x^*_{ij})^2 + \frac{1}{NJ} \sum_{i,j} \left(1-3\epsilon\right)^2 1_{\{ \tilde x_{ij} \notin [\epsilon,1-\epsilon] \}} \nonumber\\
&\leq \left( 1+ \left(\frac{1-3\epsilon}{\epsilon}\right)^2 \right) h(N,J)\nonumber\\
&\leq \frac{1}{\epsilon^2} h(N,J)
\end{align}
where the last inequality is because $\epsilon \leq \frac{1}{4}$.
According to condition A3 and the above inequality, we have
\begin{align}\label{eq:bound2}
\frac{1}{NJ} \| \tilde M - M^* \|_F^2 &= \frac{1}{NJ} \| f^{-1}(\hat X) - f^{-1}(X^*) \|_F^2 \\
&\leq{\frac{1}{(g(\epsilon))^2}\frac{1}{NJ} \| \hat X - X^* \|_F^2} \\
&\leq \frac{1}{(\epsilon g(\epsilon))^2} h(N,J).
\end{align}
The first inequality holds because $x_{ij}^*, \hat x_{ij} \in [\epsilon,1-\epsilon]$ on the event $\mathcal B_{N,J}$.

We proceed to an upper bound of $\hat M - \Theta^*(A^*)^\top$.
Recall that $M^* = \E_N (d^*)^\top + \Theta^*(A^*)^\top, \tilde M = \hat M + \E_N \hat d.$ Let $H_1 = \hat M - \Theta^*(A^*)^\top$ and $H_2 = \E_N(\hat d)^\top - \E_N (d^*)^\top.$ We have
\begin{align}\label{eq:eqforbounds}
\frac{1}{NJ} \| H_1 + H_2 \|_F^2 = \frac{1}{NJ} \left(  \|H_1\|_F^2 + \|H_2\|_F^2 + 2tr\{H_1^\top H_2\} \right).
\end{align}
We first bound the trace term in the above display,
\begin{align*}
\left|tr\{H_1^\top H_2\}\right| &= \left| tr\{ (A^*(\Theta^*)^\top - \hat M^\top)\E_N(\hat \dd - \dd^*)^\top \}  \right| \\
&=   \left| tr\{ A^*(\Theta^*)^\top\E_N(\hat \dd - \dd^*)^\top \}  \right|,  \quad (\hat M^\top\E_{N} = \0_J)\\
&= \left| (\hat \dd - \dd^*)^\top A^* (\Theta^*)^\top \E_N  \right|, \quad \text{(exchangeability for trace operator)}\\
&= \left| \left\langle \sum_j (\hat d_j - d_j^*)\aaaa_j^*, \sum_i \ttt_i^*  \right\rangle  \right| \\
&\leq \left\| \sum_j (\hat d_j - d^*_j)\aaaa_j^* \right\| \left\| \sum_i \ttt_i^* \right\|. \quad \text{(Cauchy-Schwarz inequality)}
\end{align*}
Through simple algebra, we have $d_j^* = \frac{1}{N} \sum_{i=1}^N \left( m^*_{ij} + (\ttt_i^*)^\top\aaaa_j^*\right).$ By the definition of $\hat d_j,$ we have $\hat d_j = \frac{1}{N}\sum_{i=1}^N \tilde m_{ij}.$ Then
\begin{align*}
|\hat d_j - d^*_j| &\leq \left| \frac{1}{N} \sum_i (\tilde m_{ij} - m_{ij}^*) \right| + \left| \frac{1}{N} \sum_i (\ttt_i^*)^\top \aaaa_j^* \right| \\
&\leq \left| \frac{1}{N} \sum_i(\tilde m_{ij} - m_{ij}^*) \right| + \left\| \frac{1}{N}\sum_i\ttt_i^* \right\| \left\| \aaaa_j^* \right\|,
\end{align*}
which leads to
\begin{align*}
\left\| \sum_j (\hat d_j - d^*_j)\aaaa_j^* \right\| &\leq \sum_j |\hat d_j - d_j^*| \| \aaaa_j^*\| \\
&\leq C\sum_j |\hat d_j - d_j^*|, \quad (\|\aaaa_j^*\| \leq C)\\
&\leq C\sum_j \left\{ \left| \frac{1}{N}\sum_i (\tilde m_{ij} - m_{ij}^*) \right| + \left\| \frac{1}{N}\sum_i\ttt_i^* \right\| \left\| \aaaa_j^* \right\|  \right\}\\
&\leq \frac{C}{N}\sum_{i,j} |\tilde m_{ij} - m_{ij}^*| + C^2 J\left\| \frac{1}{N}\sum_i \ttt_i^* \right\|, \quad (\|\aaaa_j^*\| \leq C)\\
&\leq CJ\sqrt{ \frac{1}{NJ}\| \tilde M - M^* \|_F^2 } + C^2J \left\| \frac{1}{N}\sum_i\ttt_i^* \right\|. \quad \text{(Cauchy-Schwarz inequality)}
\end{align*}
So we can bound $\left|tr\{H_1^\top H_2\}\right|$ by
\begin{align}\label{eq:bound1}
\left|tr\{H_1^\top H_2\}\right| \leq \left(CJ\sqrt{ \frac{1}{NJ}\| \tilde M - M^* \|_F^2 } + C^2J \left\| \frac{1}{N}\sum_i\ttt_i^* \right\| \right) \left\| \sum_i \ttt_i^* \right\|
\end{align}
According to condition A2 and law of large number, we have
\begin{align*}
\Pr\left(  \frac{1}{N} \left\|\sum_{i=1}^N \ttt_{i}^* \right\| \leq \xi  \right) \to 1, \quad \text{as } N,J \to \infty,
\end{align*}
for any $\xi>0.$
Let
\begin{align*}
\mathcal C_{N,J,\xi} :=  \left\{  \frac{1}{N} \left\|\sum_{i=1}^N \ttt_{i}^* \right\| \leq \xi  \right\}\cap \mathcal B_{N,J},
\end{align*}
then we have
\begin{align*}
\Pr( \mathcal C_{N,J,\xi} ) \to 1, \quad \text{as } N,J \to \infty,
\end{align*}
for any $\xi>0.$
On $\mathcal C_{N,J,\xi},$ according to \eqref{eq:bound2} , \eqref{eq:eqforbounds} and \eqref{eq:bound1},
\begin{align}
\frac{1}{NJ} \| \Theta^*(A^*)^\top - \hat M  \|_F^2 = \frac{1}{NJ} \| H_1 \|_F^2 &\leq \frac{1}{NJ} \| \tilde M - M^* \|_F^2 + \frac{2}{NJ} \left|tr\{H_1^\top H_2\}\right| \nonumber\\
&\leq \frac{h(N,J)}{(\epsilon g(\epsilon))^2} + C\xi \left( \frac{\sqrt{h(N,J)}}{\epsilon g(\epsilon)} + C\xi \right).\label{eq:ineq1}
\end{align}
Recall how we get $\hat \Theta, \hat A$ in algorithm \ref{alg:SVD2} and we have
\begin{align*}
&\| \hat M - \hat \Theta \hat A^\top \|_2 \\ =&  \sigma_{K+1}(\hat M)\\ =& | \sigma_{K+1}(\hat M) - \sigma_{K+1}(\Theta^*(A^*)^\top) |\nonumber\\
\leq & \| \hat M - \Theta^*(A^*)^\top \|_2\nonumber\\
 \leq &\| \Theta^*(A^*)^\top - \hat M \|_F.
\end{align*}
So
\begin{align}\label{eq:ineq2}
\| \hat\Theta\hat A^\top - \Theta^*(A^*)^\top \|_2 \leq \| \hat\Theta\hat A^\top - \hat M \|_2 + \| \hat M - \Theta^*(A^*)^\top \|_2 \leq 2\| \hat M - \Theta^*(A^*)^\top \|_F,
\end{align}
which leads to
\begin{align}
\frac{1}{NJ} \| \hat\Theta\hat A^\top - \Theta^*(A^*)^\top \|_F^2 &\leq \frac{2K}{NJ} \| \hat\Theta\hat A^\top - \Theta^*(A^*)^\top \|_2^2, \quad   \nonumber\\
&\leq \frac{8K}{NJ} \| \hat M - \Theta^*(A^*)^\top\|_F^2,  \nonumber\\
&\leq  8K \frac{h(N,J)}{(\epsilon g(\epsilon))^2} + 8KC\xi \left( \frac{\sqrt{h(N,J)}}{\epsilon g(\epsilon)} + C\xi \right), \label{eq:rate2}
\end{align}
where the first inequality is due to $\textrm{rank}\big( \hat\Theta\hat A^\top - \Theta^*(A^*)^\top \big) \leq 2K $, the second inequality is due to \eqref{eq:ineq2} and the last inequality is due to \eqref{eq:ineq1}. Thus, on the event $\mathcal C_{N,J,\xi}$
\begin{equation*}
	\frac{1}{NJ} \| \hat\Theta\hat A^\top - \Theta^*(A^*)^\top \|_F^2 = O\left(
\frac{h(N,J)}{(\epsilon g(\epsilon))^2} + \xi \left( \frac{\sqrt{h(N,J)}}{\epsilon g(\epsilon)} + \xi \right)
	\right).
\end{equation*}

Recall
\begin{align*}
\frac{h(N,J)}{(\epsilon g(\epsilon))^2} = \frac{c}{\Delta_{N,J}}\left( \frac{(h(2\epsilon))^{\frac{K+1}{K+3}}}{(\epsilon g(\epsilon))^2 J^{\frac{1}{K+3}}} + \frac{\exp(-cN)}{(\epsilon g(\epsilon))^2}\right),
\end{align*}
where $\Delta_{N,J}$ could be any sequence satisfying $\Delta_{N,J}=o(1)$.
By \eqref{eq:epsilon1}, \eqref{eq:epsilon2} and condition A5, there exists $\Delta_{N,J} = o(1)$ such that $\frac{h(N,J)}{(\epsilon g(\epsilon))^2} = o(1)$.
So fix any $\xi < 1,$ for $N,J$ large enough, we have $\frac{h(N,J)}{(\epsilon g(\epsilon))^2} \leq \xi.$
{Then there is a constant $\kappa$ such that for $N,J$ large enough, on $C_{N,J,\xi}$ with $\xi\in (0,1)$, we have,
\begin{equation}
	\frac{1}{NJ} \| \hat\Theta\hat A^\top - \Theta^*(A^*)^\top \|_F^2 \leq \kappa \xi.
\end{equation}
This combined with $\Pr(C_{N,J,\xi})\to 1$ for any $\xi$ sufficiently small completes the proof.}
\end{proof}

\begin{proof}[Proof of Lemma \ref{lem:A consis}]
Let $$Q^{(N,J)} = \frac{1}{\sqrt{N}} \hat\Theta^\top\Theta^*\left( (\Theta^*)^\top\Theta^* \right)^{-\frac{1}{2}}$$ and in the following we will show that
\begin{align*}
\frac{1}{JK}\| A^* - \hat AQ^{(N,J)} \|^2_F \overset{pr}{\rightarrow} 0.
\end{align*}

For any $\alpha > 0,$ let
\begin{align}\label{eq:event D}
\mathcal D_{N,J,\alpha} := \left\{1-\alpha \leq \frac{\sigma_K(\Theta^*)}{\sqrt{N}} \leq \frac{\sigma_1(\Theta^*)}{\sqrt{N}} \leq 1+\alpha \right\}.
\end{align}
Applying Theorem 5.39 of \cite{vershynin2010introduction} to the matrix $\Theta^*$,  we have $\lim_{N,J\to\infty}\Pr(\mathcal D_{N,J,\alpha}) = 1$ for any $\alpha>0$.
We restrict our analysis on $\mathcal D_{N,J,\alpha}$ in what follows and denote $$Q(N,J):= \frac{1}{NJ}\|\hat \Theta\hat A^\top - \Theta^*(A^*)^\top \|_F^2.$$
Then,
\begin{equation}\label{eq:A bound}
\begin{aligned}
\| A^* - \hat AQ^{(N,J)} \|_F =& \| A^* - \hat A \frac{1}{\sqrt{N}}\hat\Theta^\top\Theta^*\left( (\Theta^*)^\top\Theta^* \right)^{-\frac{1}{2}} \|_F \\
\leq& \underbrace{\| A^* - A^* \frac{1}{\sqrt{N}}\left( (\Theta^*)^\top\Theta^* \right)^{\frac{1}{2}} \|_F}_{(a)} + \underbrace{\| (A^*(\Theta^*)^\top - \hat A \hat\Theta^\top ) \frac{1}{\sqrt{N}} \Theta^*((\Theta^*)^\top\Theta^*)^{-\frac{1}{2}} \|_F}_{(b)}.
\end{aligned}
\end{equation}

We consider $(b)$ first:
\begin{equation}
\begin{aligned}
(b) &\leq \| A^*(\Theta^*)^\top - \hat A\hat \Theta^\top \|_F \frac{1}{\sqrt{N}} \| \Theta^* \|_2 \| ((\Theta^*)^\top\Theta^*)^{-\frac{1}{2}} \|_2\\
&= \sqrt{NJ} \sqrt{Q(N,J)}   \frac{\sigma_1(\Theta^*)}{\sqrt{N}}  \frac{1}{\sigma_K(\Theta^*)}\\
&\leq \sqrt{JQ(N,J)}\frac{1 + \alpha}{1-\alpha},\quad (\text{by } \eqref{eq:event D})\label{eq:(b)}
\end{aligned}
\end{equation}

For (a), notice that
\begin{align*}
\left\| \frac{1}{\sqrt{N}} \left( (\Theta^*)^\top\Theta^* \right)^{\frac{1}{2}} - I_K \right\|_2 &= \max_{1\leq k\leq K} \left|\frac{\sigma_k(\Theta^*)}{\sqrt{N}} -1 \right|  \\
&\leq \alpha. \quad (\text{by } \eqref{eq:event D})
\end{align*}
So
\begin{align}\label{eq:(a)}
(a) \leq \|A^*\|_F \left\| \frac{1}{\sqrt{N}} \left( (\Theta^*)^\top\Theta^* \right)^{\frac{1}{2}} - I_K \right\|_2 \leq C\sqrt{J}\alpha.
\end{align}

Combine \eqref{eq:A bound}, \eqref{eq:(b)} and \eqref{eq:(a)}, we get on $\mathcal D_{N,J,\alpha}$
\begin{align*}
\frac{1}{\sqrt{JK}}\| A^* - \hat AQ \|_F \leq& \frac{C\alpha}{\sqrt{K}} +\frac{1+\alpha}{\sqrt{K}(1-\alpha)}\sqrt{Q(N,J)}.
\end{align*}
Recall that $Q(N,J) = \frac{1}{NJ}\|\hat \Theta\hat A^\top - \Theta^*(A^*)^\top \|_F^2 \overset{pr}{\rightarrow} 0, \alpha$ can be arbitrarily small and $\Pr(\mathcal D_{N,J,\alpha}) \to 1,$ we complete the proof.
\end{proof}

\begin{proof}[Proof of Lemma \ref{lem:chatterjee}]
This lemma is almost the same as Theorem 1.1 of \cite{chatterjee2015matrix} by setting, in his notations, $\eta = 0.02$ and $\sigma^2 = 1/4,$
except two small differences. The first is that the probability $p$ can be changed through $N,J$ in the setting of \cite{chatterjee2015matrix} while $p$ is a constant in our setting. Therefore we absorb $p$ into constants $c$ in the LHS of \eqref{eq:chatterjee}. The second difference is a modification in step 5 of Algorithm \ref{alg:SVD2} that we require $X$ to include at least $K+1$ singular values of $Z.$
This does not change the result of Theorem 1.1 of \cite{chatterjee2015matrix} given the following lemma which is based on Lemma 3.5 of \cite{chatterjee2015matrix}.

\begin{lemma}\label{lem:chatterjee 3}
For fixed $0 < m \leq n$ and a $m\times n$ matrix $A$,
let $A = \sum_{i=1}^m \sigma_i x_iy_i^\top$ be the singular value decomposition of A. Fix any $\delta > 0$ and integer $T>0$, and define $$\tilde B := \sum_{i=1}^{l} \sigma_i x_iy_i^\top,$$ where $l = \max\{T, \argmax\{i: \sigma_i > (1+\delta) \|A - B\|\}\}.$ Then
\begin{equation}\label{eq:bound B}
\|\tilde B - B\|_F \leq (1+\delta) \sqrt{T} \|A-B\| + K(\delta) \left( \|A-B\| \|B\|_* \right)^\frac{1}{2},
\end{equation}
where $K(\delta) = (4+2\delta) \sqrt{2/\delta} + \sqrt{2+\delta}.$
\end{lemma}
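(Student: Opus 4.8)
The plan is to reduce the modified estimator $\tilde B$ to the unmodified hard-thresholded estimator of \cite{chatterjee2015matrix} and then control the finitely many extra singular components by hand. Write $E = A - B$ for the error matrix and set the threshold $\tau = (1+\delta)\|E\|$. Let $l_0 = \argmax\{i : \sigma_i > \tau\}$ be the index at which the original construction truncates, and let $\hat B = \sum_{i=1}^{l_0} \sigma_i x_i y_i^\top$ be the corresponding estimator. By Lemma~3.5 of \cite{chatterjee2015matrix}, invoked as a black box, this estimator obeys
\[
\|\hat B - B\|_F \leq K(\delta)\left(\|E\|\,\|B\|_*\right)^{1/2},
\]
so that it remains only to account for the discrepancy between $\tilde B$ and $\hat B$ that is produced by forcing the retained rank $l$ to be at least $T$.

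Since $l = \max\{T, l_0\}$, the first step is to note that $\tilde B = \hat B$ whenever $l_0 \geq T$, in which case the claimed inequality \eqref{eq:bound B} is immediate because the extra term $(1+\delta)\sqrt{T}\|E\|$ is nonnegative. The substantive case is $T > l_0$, where the two estimators differ only by the block of intermediate components,
\[
\tilde B - \hat B = \sum_{i = l_0 + 1}^{T} \sigma_i x_i y_i^\top.
\]
Here I would use that the rank-one terms $\sigma_i x_i y_i^\top$ are mutually orthogonal in the Frobenius inner product (the left and right singular vectors being orthonormal), so the norm splits as $\|\tilde B - \hat B\|_F^2 = \sum_{i=l_0+1}^{T}\sigma_i^2$. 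Every index in this sum exceeds $l_0$, so by the definition of $l_0$ each $\sigma_i \leq \tau$; summing at most $T$ such terms yields $\|\tilde B - \hat B\|_F \leq \sqrt{T}\,(1+\delta)\|E\|$. The same bound also covers the degenerate situation $l_0 = 0$, in which $\hat B = 0$.

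The two pieces are then combined by the triangle inequality,
\[
\|\tilde B - B\|_F \leq \|\tilde B - \hat B\|_F + \|\hat B - B\|_F \leq (1+\delta)\sqrt{T}\,\|E\| + K(\delta)\left(\|E\|\,\|B\|_*\right)^{1/2},
\]
which is exactly \eqref{eq:bound B}. I do not anticipate a genuine obstacle: the only nontrivial ingredient is Chatterjee's Lemma~3.5, and the effect of the $T$-truncation is captured purely by the orthogonality of the SVD components together with the fact that every retained component beyond index $l_0$ lies below the threshold $\tau$. The one point worth stating carefully is precisely this last observation, since it is what guarantees that enlarging the retained rank from $l_0$ to $T$ inflates the Frobenius error by only $\sqrt{T}$ copies of the spectral-norm-sized quantity $\tau$, rather than picking up any dependence on $\|B\|_*$; downstream, with $T = K+1$ fixed, this additive term is of smaller order than the main term and hence leaves Theorem~1.1 of \cite{chatterjee2015matrix} unaffected.
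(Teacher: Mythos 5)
Your proposal is correct and follows essentially the same route as the paper's own proof: introduce the purely threshold-based estimator $\hat B$, invoke Lemma~3.5 of \cite{chatterjee2015matrix} for $\|\hat B - B\|_F$, bound $\|\tilde B - \hat B\|_F$ by $\sqrt{T}(1+\delta)\|A-B\|$, and conclude by the triangle inequality. Your write-up is in fact more careful than the paper's, since you spell out why the extra components satisfy $\sigma_i \leq (1+\delta)\|A-B\|$ and why orthogonality of the singular components gives the $\sqrt{T}$ factor, both of which the paper leaves implicit.
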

Notice that we have another term $(1+\delta) \sqrt{T} \|A-B\|$ in \eqref{eq:bound B} compared with Lemma 3.5 in \cite{chatterjee2015matrix}, which is due to the composition of $\tilde B.$
In the proof of Theorem 1.1 in \cite{chatterjee2015matrix}, by replacing Lemma 3.5 in \cite{chatterjee2015matrix} by the above lemma with $T = K+1$, we get
\begin{equation}\label{eq:1/J term}
\frac{1}{NJ}\EEE \left( \| \tilde X - X^* \|_F^2  \Big| X^* \right) \leq c \min \left\{ \frac{\|X^*\|_*}{J\sqrt{N}} + \frac{1}{J}, \frac{\|X^*\|^2_*}{NJ}, 1 \right\}  + c e^{-cN}.
\end{equation}
The $1/J$ term in \eqref{eq:1/J term} results from the first term in \eqref{eq:bound B}.
Notice that if $$\frac{\|X^*\|_*}{J\sqrt{N}} + \frac{1}{J} \leq \frac{\|X^*\|^2_*}{NJ},$$ then $$\frac{\|X^*\|_*}{J\sqrt{N}} \leq \frac{\|X^*\|^2_*}{NJ},$$ which leads to $$\frac{\|X^*\|_*}{J\sqrt{N}} \geq \frac{1}{J}.$$ Therefore we can remove the $1/J$ term in \eqref{eq:1/J term} to complete the proof.
\end{proof}

\begin{proof}[Proof of Lemma \ref{lem:chatterjee 3}]
Let $$\hat B := \sum_{i: \sigma_i > (1+\delta)\|A-B\|}\sigma_ix_iy_i^\top$$ and by Lemma 3.5 of \cite{chatterjee2015matrix}, we have $$\|\tilde B - B\|_F \leq  K(\delta) \left( \|A-B\| \|B\|_* \right)^\frac{1}{2}.$$ Note that $$ \| \tilde B - \hat B \|_F \leq \sqrt{T}(1+\delta)\|A-B\|$$ and we complete the proof by triangular inequality.
\end{proof}

\vspace{3cm}

\bibliographystyle{apa}
\bibliography{ref}

\end{document}